\title{\bf Pricing timer options and variance derivatives with closed-form partial transform under the 3/2 model}
\author{
        Wendong Zheng, \\
         Centre for Quantitative Finance,\\
         National University of Singapore, Singapore
         \vspace{2mm}\\
        Pingping Zeng,\footnote{Correspondence author; e-mail: pingpinghkust@gmail.com. Partially supported by the Austrian Science Fund (FWF) under
grant P25815 and the European Research Council (ERC) under grant FA506041.
}\\
        Department of Mathematics,\\
        University of Vienna, Austria\\
}
\date{}
\numberwithin{equation}{section}
\newtheorem{thm}{Theorem}
\newtheorem{corollary}{Corollary}
\newtheorem{remark}{Remark}
\newtheorem{lemma}{Lemma}
\newcommand{\D}{\mathrm{d}}
\newcommand{\I}{\mathrm{i}}
\newcommand{\E}{\mathbb{E}}
\newcommand{\bw}{\boldsymbol\omega}
\newcommand{\be}{\boldsymbol\eta}
\begin{document}
\maketitle

\begin{abstract}
\noindent
Most of the empirical studies on stochastic volatility dynamics favor the 3/2 specification over the square-root (CIR) process in the Heston model. In the context of option pricing, the 3/2 stochastic volatility model is reported to be able to capture the volatility skew evolution better than the Heston model. In this article, we make a thorough investigation on the analytic tractability of the 3/2 stochastic volatility model by proposing a closed-form formula for the partial transform of the triple joint transition density $(X,I,V)$ which stand for the log asset price, the quadratic variation (continuous realized variance) and the instantaneous variance, respectively. Two distinct formulations are provided for deriving the main result. The closed-form partial transform enables us to deduce a variety of marginal partial transforms and characteristic functions and plays a crucial role in pricing discretely sampled variance derivatives and exotic options that depend on both the asset price and quadratic variation. Various applications and numerical examples on pricing exotic derivatives with discrete monitoring feature are given to demonstrate the versatility of the partial transform under the 3/2 model.

\noindent
{\it Keywords}: 3/2 model, triple joint transition density, timer options, variance derivatives, discrete monitoring.
\end{abstract}

\section{Introduction}

Stochastic volatility models (SVMs) were introduced to option pricing theory to resolve the incapability of the Black-Scholes framework in capturing the volatility smile/skew. Despite the difference in the specific assumption on the instantaneous volatility dynamics, the common practice of randomizing the volatility is to model the instantaneous volatility/variance as a correlated diffusion process. As pointed out by Itkin (2013), classic SVMs use a constant elasticity of variance (CEV) process for the instantaneous variance and there are just a few choices for the CEV parameter $\gamma$ that exhibit mathematical tractability. Nevertheless, various versions of SVMs have been proposed in the literature. Hull and While (1987) model the instantaneous variance process as a geometric Brownian motion ($\gamma=1$). Scott (1987) and Chesney and Scott (1989) let the log variance process be a mean-reverting Ornstein-Uhlenbeck (OU) process. Stein and Stein (1991) and Sch\"{o}bel and Zhu (1999) assume that the instantaneous volatility follows a mean-reverting OU process ($\gamma=0$). Heston (1993) instead proposes a mean-reverting square root process ($\gamma=1/2$) for the instantaneous variance. Barndorff-Nielsen and Shephard (2001) use a mean-reverting OU process to model the instantaneous variance process. Other variations of SVMs can be found in Bates (1996), Carr {\it et al.} (2003), Carr and Wu (2003), etc. Among all the proposed SVMs, the Heston model and its variants are the most popular ones. It is known that the Heston-type SVMs are essentially a subclass of the affine model family [see Duffie {\it et al.} (2000)].

Despite its popularity, there is a surprisingly large amount of empirical studies that report inconsistency of the affine models with market observations. Poteshman (1998) studies S\&P500 index option prices over a 7-year period and concludes that both the physical and risk-neutral drifts of the instantaneous variance are nonaffine. Also, the volatility of variance is observed to be an increasing convex function of the instantaneous variance. Using an affine drift CEV process for the instantaneous variance to fit the S\&P500 daily returns over a 30-year period, Ishida and Engle (2002) estimate the CEV parameter $\gamma=1.71$. In a similar work by Jones (2003), $\gamma$ is found to be 1.33 for daily S\&P100 returns and implied volatilities over a 14-year period. Chacko and Viceira (2003) employ the technique of the generalized method of moments (GMM) on a 35-year period of weekly returns and a 71-year period of monthly returns and estimate the CEV power to be 1.10 and 1.65, respectively. Javaheri (2004) tests three CEV power values: $\gamma=0.5,1,1.5$ on the time series of S\&500 daily returns and finds that $\gamma=1.5$ outperforms the other two. Using the same data as Jones (2003), Bakshi {\it et al.} (2006) test several SVMs on the time series of S\&P100 implied volatilities and find that a linear drift model is rejected and the coefficient of the quadratic term is highly significant. Moreover, their estimate of the CEV power is 1.27. The implication of these empirical findings is that one should use a diffusion process with nonaffine drift and CEV power greater than 1. With a quadratic drift and CEV power equal to 1.5, the 3/2 stochastic volatility model is obviously more firmly supported by the empirical study than the affine models.

Interestingly, the 3/2 model is not a brand-new model. In fact, it has already been examined by Heston (1997) and Lewis (2000) for its analytic tractability, applied to construct short rate models by Ahn and Gao (1999) and used to model credit default intensity by Andreasen (2001). However, until very recently the 3/2 model has not been considered as a mainstream SVM. The growing attention from the academia to the 3/2 model is partially attributed to the increasing interest in consistently modeling equity and volatility markets (represented by VIX), as a result of the booming of the volatility derivative market. In a recent work by Carr and Sun (2007), a new framework for option pricing that directly models the variance swap rate is proposed. They argue that the 3/2 specification for the instantaneous variance is a direct consequence of the model consistency requirement. Following Carr and Sun's work, Itkin and Carr (2010) introduce the 3/2 power time change process, and Chan and Platen (2010) price long dated variance swaps under the 3/2 stochastic volatility model. By performing extensive numerical comparisons between the Heston model and the 3/2 model, Drimus (2012) reports that the 3/2 model is superior to the Heston model in the sense that it is able to predict upward-sloping volatility of variance smiles, which is in good consistency with market observations. Most recently, Goard and Mazur (2013) report strong empirical evidence that VIX follows a 3/2 process other than an affine square root process. In an effort to consistently modeling VIX and equity derivatives, Baldeaux and Badran (2014) consider a 3/2 plus jumps model for pricing VIX derivatives. Yuen {\em et al.} (2014) derive closed form pricing formulas for exotic variance swaps under the 3/2 model.

Analytic tractability of an underlying model is essential for derivatives pricing. As for the 3/2 model, the characteristic function of the log asset price process is derived in Heston (1997) and Lewis (2000). In order to facilitate the pricing of volatility derivatives, Carr and Sun (2007) obtain the joint characteristic function of the log asset price and the quadratic variation, and Baldeaus and Badran (2012) extend their result to the 3/2 plus jumps model. Lewis (2014) derives the joint transition density of the log asset price and the instantaneous variance for the 3/2 model with constant parameters. Nevertheless, the full characterization of the 3/2 model seems to rely on the joint distribution of the triple $(X,I,V)$ which stand for the log asset price, quadratic variation and the instantaneous variance. Our paper fills this gap by providing a complete description of the joint distribution through the closed-form partial transform of the triple transition density. Two distinct formulations are proposed for deriving our main result. The partial differential equation (PDE) approach is inspired by the work by Carr and Sun (2007) and Lewis (2014). Carr and Sun (2007) solve the governing PDE with an ingenious choice of substituting variable and then reduce the PDE to an ODE. Lewis (2014) uses a sequence of transformations to reduce the original governing PDE to a first-order linear PDE which is solvable by the standard method of characteristics. Motivated by the joint exponential affine structure of the log asset price and quadratic variation, we define the partial transform of the triple joint density function and then solve the governing PDE by converting it to a Riccati system of ordinary differential equations through a Laplace transform. In the pure probabilistic formulation, the partial transform is factorized as a product of the conditional characteristic function of the integrated variance and the marginal transition density of the instantaneous variance. We then find the explicit expressions for both terms by using the change of measure technique and the reciprocal relation between the 3/2 process and CIR process.

Put in the relatively thin collection of literature on option pricing under the 3/2 model, the contribution of our work is three-fold. First, we propose a closed-form formula of the partial transform of the triple transition density under the 3/2 model with a time-dependent mean reversion parameter. The closed-form partial transform fully characterizes the analytic property of the 3/2 model, and most of the existing analytic formulas for the joint density functions and characteristic functions in the literature can be viewed as marginal versions of the newly derived partial transform. The closed-form partial transform enables us to accommodate the standard transform method to the pricing of derivative products whose terminal payoffs have exotic dependency on the asset price and quadratic variation under the 3/2 model with a time-dependent mean reversion parameter. Second, we provide two distinct derivations of the main result and each approach exhibits some unique novelty and improvement over the extant methodologies. In particular, our PDE approach takes advantage of the affine property of the pair $(X,I)$ and explicitly solves the governing PDE by converting the equation into a Riccati system of ODEs. Our probabilistic approach explores the relationship between the partial transform and the conditional characteristic function of the integrated variance and works out the closed-form expression for the latter with the change of measure technique. Third, using the newly derived formula, we manage to analytically price a variety of lately developed exotic derivative products, including finite-maturity discrete timer options and discretely sampled weighted moment swaps.

The rest of the paper is structured as follows. In Section 2, we provide a detailed description of the 3/2 model and its financial intuition. Section 3 is devoted to the derivation of our main result on the partial transform of the triple joint transition density under the PDE and probabilistic formulations. We also show that most marginal characteristic functions under the 3/2 model can be derived immediately from the partial transform of the triple. Section 4 covers examples for demonstrating the applications in derivatives pricing under the 3/2 model. In particular, the pricing of finite-maturity discrete timer options and discretely sampled weighted moment swaps are investigated in details. Numerical experiments and analyses are provided in Section 5 and conclusive remarks are given in Section 6.

\section{The 3/2 model}

Consider the 3/2 stochastic volatility model specified as follows:
\begin{equation}\label{themodel}
\begin{split}
  \frac{\D S_t}{S_t} &= (r-q)\,\D t + \sqrt{V_t}\big(\rho\,\D W^1_t + \sqrt{1-\rho^2}\,\D W^2_t\big),\\
  \D V_t &= V_t(\theta_t-\kappa V_t)\,\D t + \varepsilon V_t^{3/2}\,\D W^1_t,
\end{split}
\end{equation}
where $W^1_t$ and $W^2_t$ are two independent Brownian motions under the pricing measure $Q$. Here, we assume a constant riskfree rate $r$ and dividend yield $q$. Time-dependent deterministic riskfree rate and dividend yield can be accommodated without difficulty. In contrast to the square root process with affine drift in the Heston model, the parameters in the 3/2 variance dynamics need to be interpreted differently. The speed of mean reversion is now $\kappa V_t$, which is linear in $V_t$ and is a stochastic quantity. Since $\kappa>0$ under usual scenarios, the mean correction is quicker when the instantaneous variance is higher. Also, $\varepsilon$ cannot be interpreted as the same volatility of variance as in the Heston model. In fact, one needs to multiply it by a scaling factor $V_t$ in order to make it comparable to its counterpart in the Heston model. The long-term mean reversion level is given by $\theta_t/\kappa$. As pointed out by Itkin and Carr (2010), $\theta_t$ could be an independent stochastic process in the most general setting. By conditioning on the path of $\theta_t$, the analytic tractability of the 3/2 model with stochastic mean reversion level remains intact. Although in practice $\theta_t$ is often taken to be constant, this flexibility allows for more delicate modeling of the instantaneous variance dynamics when necessary.
%As a remark, we note that the Heston model will lose tractability if the long term mean reversion level is time dependent.
Thus, in this paper, we assume $\theta_t$ to be a deterministic function of time, but an extension to an independent random process $\theta_t$ is not difficult.

%To guarantee that the square-root variance process never reaches zero boundary, some technical conditions are needed to have a well defined Heston model. Observe that the reciprocal of the variance process of the 3/2 model follows a mean-reverting square-root
%process. To satisfy the non-explosion condition for the variance in the 3/2 model and ensure that the discount asset price process is a martingale, the parameters of the 3/2 model specified by (\ref{themodel}) are constrained by (see Drimus 2012)
Similar to the Heston model, not all choices of model parameters are admissible, in the sense that the non-explosion of $V_t$ and the martingale property of the discounted asset price process are guaranteed. According to Drimus (2012), the parameters of the 3/2 model specified by (\ref{themodel}) are constrained by
\begin{equation}
  \kappa - \rho\varepsilon \geq - \frac{\varepsilon^2}{2}.
\end{equation}
Notice that under normal market condition, $\kappa>0$ and $\rho\leq0$, the above inequality is automatically satisfied.

\section{The closed-form partial transform}

In this section, we present our main result on the closed-form formula for the partial transform defined previously. On top of deriving the main result by solving the governing PDE analytically, we provide an insightful probabilistic argument and some interesting intermediate results are obtained as a byproduct. We also briefly discuss how to deduce various marginal characteristic functions and transforms from the main result.

\subsection{The main result}

Let $X_t=\ln S_t$ be the log asset price and $I_t=\int_0^t V_s \,\D s$ be the quadratic variation of the log asset price process. Let $G(t,x,y,v;t',x',y',v')$ be the joint transition density of the triple $(X,I,V)$ from state $(x,y,v)$ at time $t$ to state $(x',y',v')$ at a subsequent time $t'$. By the Feynman-Kac Theorem, $G$ satisfies the following Kolmogorov backward equation:
\begin{equation}\label{pde}
\begin{split}
  -\frac{\partial G}{\partial t} &= \left(r-q-\frac{v}{2}\right)\frac{\partial G}{\partial x} + \frac{v}{2}\frac{\partial^2G}{\partial x^2} + v\frac{\partial G}{\partial y} +v(\theta_t-\kappa v)\frac{\partial G}{\partial v} + \frac{\varepsilon^2v^3}{2}\frac{\partial^2G}{\partial v^2}+\rho\varepsilon v^2\frac{\partial^2G}{\partial x\partial v},
\end{split}
\end{equation}
subject to the terminal condition:
\[
  G(t',x,y,v;t',x',y',v') = \delta(x-x')\delta(y-y')\delta(v-v'),
\]
where $\delta(\cdot)$ is the Dirac delta function.

The {\it partial transform} of $G$, denoted by $\check G$, is defined by
\begin{equation}\label{transformeddensity}
  \check G(t,x,y,v;t',\omega,\eta,v') = \int_{-\infty}^\infty\int_0^\infty e^{\I\omega x'+\I\eta y'}G(t,x,y,v;t',x',y',v')\,\D y'\D x'.
\end{equation}
The reason for introducing $\check G$ is that the affine structure of the pair $(X,I)$ makes $\check G$ more tractable than $G$ itself. Also, $\check G$ is more convenient to be used in option pricing. Obviously, $\check G$ also satisfies (\ref{pde}) with the terminal condition being $\check G(t',x,y,v;t',\omega,\eta,v')=e^{\I\omega x+\I\eta y}\delta(v-v')$. Since (\ref{pde}) has no coefficient involving $x$ or $y$, it is natural to come up with the following solution form:
\begin{equation}\label{checkG}
  \check G(t,x,y,v;t',\omega,\eta,v') = e^{\I\omega x+\I\eta y}g(t,v;t',\omega,\eta,v'),
\end{equation}
where $g$ satisfies the following PDE:
\begin{equation}\label{pde2}
  \begin{split}
  -\frac{\partial g}{\partial t} &=  \left[\I\omega\left(r-q-\frac{v}{2}\right)-\omega^2\frac{v}{2}+\I\eta v\right]g + [v(\theta_t-\kappa v)+\I\omega\rho\varepsilon v^2]\frac{\partial g}{\partial v}+\frac{\varepsilon^2v^3}{2}\frac{\partial^2g}{\partial v^2},
  \end{split}
\end{equation}
subject to the terminal condition:
\[
  g(t',v;t',\omega,\eta,v') = \delta(v-v').
\]
From (\ref{pde2}), we see that $g$ is essentially a function of $v$ and $t$, while all the other variables affect $g$ either through the PDE coefficients or boundary condition.
%So far, our derivation does not rely on the assumption of the stochastic variance dynamics, though the coefficients of (\ref{pde2}) are specific for the 3/2 model. In other words, the idea of defining the partial transform of the triple joint density and proposing a solution form as (\ref{checkG}) works for any stochastic volatility model.
\begin{thm}\label{thm1}
Under the 3/2 stochastic volatility model specified by (\ref{themodel}), the partial transform of the triple joint transition density function defined by (\ref{transformeddensity}) is found to be
\begin{eqnarray}
  \check G(t,x,y,v;t',\omega,\eta,v') &=& e^{\I\omega x+\I\eta y}g(t,v;t',\omega,\eta,v'),\label{jointG}
\end{eqnarray}
where
\begin{equation}\label{fung}
  g(t,v;t',\omega,\eta,v') = e^{a(t'-t)}\frac{A_t}{C_t}\exp\left(-\frac{A_tv+v'}{C_tvv'}\right)\frac{1}{(v')^2}
   \left(\frac{A_tv}{ v'}\right)^{\frac{1}{2}+\frac{\tilde\kappa}{\varepsilon^2}}
  I_{2c}\left(\frac{2}{C_t}\sqrt{\frac{A_t}{vv'}}\right).
\end{equation}
Here,
\begin{eqnarray*}
  &&a = \I\omega(r-q),~~~\tilde\kappa =\kappa-\I\omega\rho \varepsilon,\\
 &&A_t = e^{\int_t^{t'}\theta_s\,\D s},~~~ C_t = \frac{\varepsilon^2}{2}\int_t^{t'}e^{\int_{t}^{s}\theta_{s'}\,\D s'}\,\D s,\\
&& c = \sqrt{\left(\frac{1}{2}+\frac{\tilde\kappa}{\varepsilon^2}\right)^2+\frac{\I\omega+\omega^2-2\I\eta}{\varepsilon^2}}.
\end{eqnarray*}
\end{thm}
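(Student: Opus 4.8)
The plan is to solve the linear second-order PDE \eqref{pde2} for $g$ directly. Since all of $x$, $y$, $\omega$, $\eta$ enter only through constant coefficients and the Feynman--Kac potential term, I would first strip off the exponential drift contribution by writing $g = e^{a(t'-t)}h$, where $a=\I\omega(r-q)$ absorbs the $\I\omega(r-q)g$ term. This reduces \eqref{pde2} to a cleaner equation for $h$ whose first-order coefficient is $v(\theta_t-\tilde\kappa v)$ with $\tilde\kappa=\kappa-\I\omega\rho\varepsilon$ (the correlation term $\I\omega\rho\varepsilon v^2$ merges with $-\kappa v^2$ to produce the effective mean-reversion speed $\tilde\kappa$), and whose zeroth-order potential is the constant-times-$v$ term $-(\I\omega+\omega^2-2\I\eta)v/2$. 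The structure here is that of the Kolmogorov backward operator for a $3/2$-type diffusion with a linear killing rate.

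The key reduction I would exploit is the reciprocal relation between the $3/2$ process and a CIR (square-root) process: if $V$ follows the $3/2$ dynamics in \eqref{themodel}, then $R_t = 1/V_t$ follows a CIR-type process. Under this substitution the backward equation for $h$, which has the awkward $v^3$ diffusion coefficient, transforms into a constant-plus-linear diffusion problem whose transition density is the classical noncentral chi-square / Bessel-type kernel. Concretely, I would change variables to $R=1/v$ and $R'=1/v'$, track how the terminal Dirac mass transforms (picking up a Jacobian factor that ultimately contributes the $(v')^{-2}$ and the power of $A_t v/v'$), and identify the resulting equation as that governing the transition density of a time-inhomogeneous CIR process with a linear killing term. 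The time-dependence of $\theta_s$ is handled by the deterministic integrating factors $A_t=e^{\int_t^{t'}\theta_s\,\D s}$ and $C_t=\tfrac{\varepsilon^2}{2}\int_t^{t'}e^{\int_t^s\theta_{s'}\,\D s'}\,\D s$, which arise from solving the associated first-order (Riccati-type) ODEs for the coefficients after a Laplace transform in $v$; these capture the cumulative effect of the mean-reversion schedule.

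Once the problem is cast as a CIR transition density with killing, the known closed form gives a modified Bessel function $I_{2c}$ of the combined radial variable $\tfrac{2}{C_t}\sqrt{A_t/(vv')}$, where the index $2c$ is fixed by the constant part of the radial operator: the $+\tfrac12+\tilde\kappa/\varepsilon^2$ shift comes from the drift/dimension of the CIR process, while the additive correction $(\I\omega+\omega^2-2\I\eta)/\varepsilon^2$ inside the square root for $c$ comes precisely from the linear killing potential $-(\I\omega+\omega^2-2\I\eta)v/2$. I would verify that the index, the exponential prefactor $\exp(-(A_t v+v')/(C_t v v'))$, and the algebraic powers all match by direct comparison with the standard Feller/CIR kernel after undoing the reciprocal substitution.

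The main obstacle I anticipate is the careful bookkeeping of the change of variables $v\mapsto 1/v$: it simultaneously alters the diffusion coefficient (from $\varepsilon^2 v^3/2$ to a CIR-type $\varepsilon^2 R/2$ form), the drift, the killing term, and the terminal condition, and one must propagate the Jacobian correctly so that the Dirac delta $\delta(v-v')$ maps to the right normalization and produces the exact prefactors $(v')^{-2}$ and $(A_t v/v')^{1/2+\tilde\kappa/\varepsilon^2}$ in \eqref{fung}. A secondary technical point is confirming that the effective killing rate genuinely enters only through the Bessel index $c$ (and not, say, through a shift in the exponential argument), which requires matching the full radial ODE rather than just its leading terms. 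Verifying the terminal condition $g(t',v;t',\omega,\eta,v')=\delta(v-v')$, i.e. that as $t\to t'$ one has $A_t\to1$ and $C_t\to0$ and the Bessel kernel collapses to the delta function, provides a useful consistency check on the whole derivation.
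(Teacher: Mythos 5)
Your overall route coincides with the paper's PDE derivation in Appendix~\ref{appA}: strip off the drift via $g=e^{a(t'-t)}\tilde g$, absorb the correlation term into $\tilde\kappa=\kappa-\I\omega\rho\varepsilon$, pass to the reciprocal variables $u=1/v$, $u'=1/v'$, obtain $A_t$ and $C_t$ from a Riccati system after a Laplace transform, and invert to a modified Bessel kernel. There is, however, one genuine gap, and it sits exactly at the point you set aside as a ``secondary technical point.'' After the substitution $u=1/v$, the killing potential $-\tfrac{\I\omega+\omega^2-2\I\eta}{2}\,v$ becomes proportional to $1/u$, so what you obtain is \emph{not} a time-inhomogeneous CIR process with a linear killing term whose kernel is classical: it is a CIR generator perturbed by a reciprocal-of-state potential, and the standard Feller/noncentral chi-square density does not apply to it. No amount of Jacobian bookkeeping on the unkilled kernel can produce the $(\omega,\eta)$-dependent index $2c$; the index of the plain CIR transition density is $1+2\kappa/\varepsilon^2$ and carries no trace of $\eta$.

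The missing device is the similarity transformation $\tilde g = u^{\alpha}(u')^{2-\alpha}f$ with $\alpha$ chosen as a root of $\tfrac{\varepsilon^2}{2}(\alpha^2+\alpha)+\tilde\kappa\alpha-\tfrac{\I\omega+\omega^2}{2}+\I\eta=0$: this choice annihilates the $1/u$ potential, leaves a genuinely affine PDE for $f$ (a CIR density problem with shifted drift coefficient $\varepsilon^2(\alpha+1)+\tilde\kappa$), and is precisely what generates both the Bessel index $2c$ and the algebraic prefactor $\left(A_tv/v'\right)^{\frac{1}{2}+\tilde\kappa/\varepsilon^2}$ in (\ref{fung}). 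One must also select the correct root $\alpha=-\left(\tfrac{1}{2}+\tilde\kappa/\varepsilon^2\right)+c$ so that the inverse Laplace transform formula is applicable; the paper devotes a separate discussion to this sign choice. (Alternatively, the reciprocal potential can be handled probabilistically by the change of measure underlying Theorem~\ref{thm2}, which is the paper's second derivation.) Some such mechanism must be supplied: your proposal correctly locates where the index shift has to come from, but does not provide the argument that produces it. The remaining elements of your plan --- the drift stripping, the $(v')^{-2}$ Jacobian, the integrating factors $A_t,C_t$, and the delta-function limit as $t\to t'$ --- are sound and match the paper.
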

\begin{proof}
We prove it by explicitly solving \eqref{pde2}. The details are given in Appendix~\ref{appA}.
\end{proof}
%\begin{remark}
%Alternatively, one can solve (\ref{affinepde1}) with the standard method of characteristics as Lewis (2014). Interested readers can find the details in Appendix~\ref{appB}.
%\end{remark}
\begin{remark}
The solution form of $\check G$ given by \eqref{checkG} holds for any stochastic volatility process. In fact, if we consider a general stochastic volatility model specified by
\begin{equation}\label{heston}
\begin{split}
  \frac{\D S_t}{S_t} &= (r-q)\,\D t + \sqrt{V_t}\big(\rho\,\D W^1_t + \sqrt{1-\rho^2}\,\D W^2_t\big),\\
  \D V_t &= \alpha(t,V_t)\,\D t + \beta(t,V_t)\,\D W^1_t.
\end{split}
\end{equation}
It can be shown in a similar manner that the governing equation for $g$ is given by
\begin{equation}\label{hestonpde}
  \begin{split}
  -\frac{\partial g}{\partial t} &=  \left[\I\omega\left(r-q-\frac{v}{2}\right)-\omega^2\frac{v}{2}+\I\eta v\right]g + [\alpha(t,v)+\I\omega\rho \sqrt{v}\beta(t,v)]\frac{\partial g}{\partial v}+\frac{\beta(t,v)^2}{2}\frac{\partial^2g}{\partial v^2},
  \end{split}
\end{equation}
subject to the terminal condition:
\(
  g(t',v;t',\omega,\eta,v') = \delta(v-v').
\)
The partial transform of the model \eqref{heston} admits a closed-form expression if \eqref{hestonpde} can be solved explicitly.
\end{remark}
\begin{remark}
%There are also criticisms arguing that pure diffusion processes, accompanied by either square root or 3/2 stochastic variance, cannot adequately capture the short-end implied volatility structure. The inclusion of jumps in the asset price process is a necessary complement [see Jones (2003), Baldeaux and Badran (2014)].
Theorem~\ref{thm1} can be easily extended to incorporate jumps in the asset price process. The analytic formula under the 3/2 plus jumps model is given in Appendix~\ref{3o2jump}.
\end{remark}

\subsubsection*{A probabilistic formulation of the partial transform}

The partial transform defined by \eqref{transformeddensity} admits an interesting conditional factorization, from which we can derive our main result in a pure probabilistic manner. Let $p_V(V_{t'}|V_t)$ be the transition density function of the instantaneous variance for any $t'>t$ under the pricing measure $Q$. Since $G(t,X_t,I_t,V_t;t',X_{t'},I_{t'},V_{t'})=G(t,X_t,I_t,V_t;t',X_{t'},I_{t'}|V_{t'})p_V(V_{t'}|V_t)$, we have
\begin{eqnarray}\label{G_condition}
  \check G(t,X_t,I_t,V_t;t',\omega,\eta,V_{t'})& =& \int_{-\infty}^\infty\int_0^\infty e^{\I\omega x'+\I\eta y'}G(t,X_t,I_t,V_t;t',x',y',V_{t'})\,\D y'\D x' \nonumber \\
  &=&\E_t [e^{\I \omega X_{t'}+\I \eta I_{t'}}|V_{t'}] p_V(V_{t'}|V_t),\label{relation1}
\end{eqnarray}
where $\E_t[\cdot]$ is short for $\E[\cdot|\mathcal{F}_t]$ under the measure $Q$ with $(\mathcal{F}_t)_{t\geq 0}$ being the natural filtration. The dynamics in (\ref{themodel}) can be rewritten to give [e.g. Zeng {\it et al.} (2013)]
\begin{equation*}
\begin{split}
 X_{t'}&=X_t+(r-q)(t'-t)+\frac{\rho}{\varepsilon} \left [\ln V_{t'}-\ln V_t-\int_t^{t'}\theta_s \,\D s \right]\\
 &\quad+\left [\rho\varepsilon \left(\frac{\kappa}{\varepsilon^2}+\frac{1}{2} \right)-\frac{1}{2}\right]\int_t^{t'} V_s\,\D s+\sqrt{1-\rho^2}\int_t^{t'}V_s \,\D W_s^2.
\end{split}
\end{equation*}
With this expression, the conditional expectation can be reformulated as
\begin{eqnarray}\label{joint_single}
  &&\E_t [e^{\I \omega X_{t'}+\I \eta I_{t'}}|V_{t'}] \nonumber\\
  &&=e^{\I \omega X_{t}+\I \eta I_{t}} \E_t\left[\E_t\Big[e^{\I \omega (X_{t'}-X_{t})}\Big|\int_t^{t'} V_s\,\D s,V_{t'}\Big] e^{\I \eta\int_t^{t'} V_s\,\D s}        \Bigg|V_{t'} \right] \nonumber \\
  %&&=e^{\I \omega X_{t}+\I \eta I_{t}}e^{\I \omega (r-q)(t'-t)}\left (\frac{V_{t'}}{V_{t}} \right)^{\I \omega \frac{\rho}{\varepsilon}}e^{-\I \omega \frac{\rho}{\varepsilon}\int_t^{t'}\theta_s ~\D s } \nonumber \\
  %&&\quad\E_t \left[e^{ \left\{ \I \omega\left [ \rho\varepsilon \left(\frac{\kappa}{\varepsilon^2}+\frac{1}{2} \right)-\frac{1}{2}\right]+\I \eta \right\} (I_{t'}-I_t)+\I \omega \sqrt{1-\rho^2}\int_t^{t'}V_s ~\D W_s^2 }\Big|V_{t'}\right]  \nonumber \\
&&=e^{\I \omega X_{t}+\I \eta I_{t}}e^{\I \omega (r-q)(t'-t)}\left (\frac{V_{t'}}{A_tV_{t}} \right)^{\I \omega \frac{\rho}{\varepsilon}} \E_t \left[e^{ \left\{ \I \omega\left [ \rho\varepsilon \left(\frac{\kappa}{\varepsilon^2}+\frac{1}{2} \right)-\frac{1}{2}\right]-\frac{1}{2}(1-\rho^2)\omega^2+\I \eta \right\} \int_t^{t'} V_s\,\D s}\Big|V_{t'}\right].\label{relation2}
\end{eqnarray}
Therefore, the original problem boils down to the computation of the transition density of the variance process $p_V(V_{t'}|V_t)$ and the conditional characteristic function of the integrated variance in the form of $\E [e^{\I\xi \int_t^{t'} V_s\,\D s}|V_{t'},V_t]$.

\noindent
\textit{Transition density of the instantaneous variance}

\noindent
It is well known that the reciprocal of the 3/2 process is a CIR process. Write $U_t=\frac{1}{V_t}$, then $U_t$ is given by the following inhomogeneous CIR process:
\begin{equation}\label{U_tCIR}
  \D U_t=\left( (\kappa+\varepsilon^2)-\theta_t U_t \right)\D t-\varepsilon \sqrt{U_t}\,\D W_t^1.
\end{equation}
\begin{lemma}\label{thm_prob}
For any $t'>t$, $U_{t'}$ follows a non-central chi-square distribution conditional on $U_t$. Its transition (conditional) density function is given by
\begin{equation}\label{density_U}
  p_U(U_{t'}|U_{t})=\frac{A_t}{C_t} \mathrm{exp}\left(-\frac{A_t U_{t'}+U_t}{C_t}\right)
   \left(\frac{A_tU_{t'}}{U_t}\right)^{\frac{1}{2}+\frac{\kappa}{\varepsilon^2}}I_{1+\frac{2\kappa}{\varepsilon^2}}\left(\frac{2}{C_t}\sqrt{A_tU_{t'} U_t}\right),
\end{equation}
where $A_t$ and $C_t$ are given in Theorem~\ref{thm1}.
\end{lemma}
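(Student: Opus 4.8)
The plan is to exploit the squared-Bessel structure hidden in the reciprocal process $U_t=1/V_t$: I would reduce the time-inhomogeneous CIR dynamics \eqref{U_tCIR} to a standard squared Bessel process by a deterministic rescaling followed by a deterministic time change, and then read off \eqref{density_U} from the classical transition density of squared Bessel processes. As a preliminary I would verify \eqref{U_tCIR} itself: It\^o's formula applied to $U_t=V_t^{-1}$, together with $(\D V_t)^2=\varepsilon^2V_t^3\,\D t$, gives $\D U_t=[(\kappa+\varepsilon^2)-\theta_tU_t]\,\D t-\varepsilon\sqrt{U_t}\,\D W_t^1$, so that the sole obstruction to a time-homogeneous CIR is the time-dependent mean-reversion speed $\theta_t$.

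To eliminate this time dependence I would introduce the deterministic integrating factor $M_s=\exp\!\big(\int_t^s\theta_{s'}\,\D s'\big)$, normalized so that $M_t=1$ and $M_{t'}=A_t$, and set $Z_s=M_sU_s$. Since $M$ is deterministic, It\^o gives $\D Z_s=(\kappa+\varepsilon^2)M_s\,\D s-\varepsilon\sqrt{M_sZ_s}\,\D W_s^1$, i.e.\ the mean-reversion term cancels and the drift no longer depends on the state. I would then apply the deterministic time change $\tau(s)=\frac{\varepsilon^2}{4}\int_t^sM_u\,\D u$; under it the drift becomes the constant $\delta:=4(\kappa+\varepsilon^2)/\varepsilon^2$ and the quadratic variation of the martingale part becomes $4Z\,\D\tau$, so the time-changed process $\tilde Z_\tau=Z_{s(\tau)}$ solves $\D\tilde Z_\tau=\delta\,\D\tau+2\sqrt{\tilde Z_\tau}\,\D\tilde B_\tau$ for a Brownian motion $\tilde B$ (the sign of the original diffusion being immaterial for the law). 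This is exactly a squared Bessel process of dimension $\delta$, whose transition density over an interval of length $T$ is the classical
\[
  q_T(z_0,z)=\frac{1}{2T}\left(\frac{z}{z_0}\right)^{\nu/2}\exp\!\left(-\frac{z_0+z}{2T}\right)I_\nu\!\left(\frac{\sqrt{z_0z}}{T}\right),\qquad \nu=\frac{\delta}{2}-1=1+\frac{2\kappa}{\varepsilon^2},
\]
which simultaneously establishes the non-central chi-square claim and pins down the index $\nu$ of the modified Bessel function appearing in \eqref{density_U}.

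It then remains to evaluate at the correct arguments and undo the rescaling. With the reference time taken at $t$, one has $z_0=Z_t=U_t$, $z=Z_{t'}=A_tU_{t'}$, and elapsed $\tau$-time $T=\tau(t')-\tau(t)=C_t/2$, so that $1/(2T)=1/C_t$ and $\sqrt{z_0z}/T=\frac{2}{C_t}\sqrt{A_tU_{t'}U_t}$. Transforming the law of $Z_{t'}$ back to that of $U_{t'}$ through the linear map $U_{t'}\mapsto A_tU_{t'}$ contributes the Jacobian $A_t$, and collecting the prefactor $A_t/C_t$, the power $(A_tU_{t'}/U_t)^{1/2+\kappa/\varepsilon^2}$ and the exponential $\exp(-(A_tU_{t'}+U_t)/C_t)$ reproduces \eqref{density_U} exactly.

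The conceptual steps above are routine; the only place demanding care --- and the step I would expect to be the true crux --- is the bookkeeping of constants across the two successive changes of variables. In particular one must fix the integrating-factor normalization so that $z_0=U_t$ and $z=A_tU_{t'}$, confirm that the time change sends $[t,t']$ precisely onto $[0,C_t/2]$, and combine the squared-Bessel prefactor $1/(2T)$ with the Jacobian $A_t$ to land on the exact normalization of \eqref{density_U}. As an alternative I could instead solve the time-dependent Riccati ODE for the conditional Laplace transform $\E[e^{-\lambda U_{t'}}\,|\,U_t]$ afforded by the affine CIR structure and recognize it as that of a scaled non-central chi-square; I prefer the time-change route because it delivers the density in closed form without a separate Laplace inversion.
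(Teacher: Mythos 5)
Your proposal is correct and follows essentially the same route as the paper, which reduces the inhomogeneous CIR process to a squared Bessel process via exactly the space--time change $e^{\int\theta}U$ together with the clock $\frac{\varepsilon^2}{4}\int e^{\int\theta}\,\D s$, and then cites Jeanblanc \emph{et al.} (2009) for the resulting transition density. You simply supply in full the bookkeeping (integrating factor, dimension $\delta=4(\kappa+\varepsilon^2)/\varepsilon^2$, index $\nu=1+2\kappa/\varepsilon^2$, elapsed time $C_t/2$, Jacobian $A_t$) that the paper delegates to the reference, and it all checks out.
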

\begin{proof}
The inhomogeneous CIR process (\ref{U_tCIR}) equals a squared Bessel process $B(t)$ transformed by the following space-time changes in distribution
 \begin{equation*}
   (U_t, t\geq 0)=\left(\frac{1}{e^{\int_0^t \theta_s \,\D s}} B\left(\frac{\epsilon^2}{4}\int_0^t e^{\int_0^s \theta_{s'} \,\D s'} \,\D s \right), t\geq 0\right).
 \end{equation*}
 Based on the above relation and transition density of the squared Bessel process, Jeanblanc \emph{et al.} (2009) provide the transition density of the inhomogeneous CIR process. We refer interested readers to Jeanblanc \emph{et al.} (2009) for more details.
\end{proof}
It follows that the transition density of $V_t$ can be inferred to be
\begin{equation}\label{conditional_densityV}
  p_{V}(V_{t'}|V_t)=\frac{A_t}{C_t} \mathrm{exp}\left(-\frac{A_t V_t+V_{t'}}{C_tV_tV_{t'}}\right)
  \frac{1}{V^2_{t'}}\left(\frac{A_t V_t}{V_{t'}}\right)^{\frac{1}{2}+\frac{\kappa}{\varepsilon^2}}I_{1+\frac{2\kappa}{\varepsilon^2}}\left(\frac{2}{C_t}\sqrt{\frac{A_t}{V_{t'} V_t}}\right).
\end{equation}

\noindent
\textit{Conditional characteristic function of the integrated variance}

\noindent
Because of the strong path dependency of $\E[e^{\I\xi\int_t^{t'} V_s\,\D s}|V_{t'},V_t]$, the evaluation of this conditional expectation poses a major challenge.
%In the literature of exact simulation for stochastic volatility models, the closed-form expression of the conditional characteristic function of the integrated variance $\int_t^{t'} V_s\,\D s$ is essential for generating a sample of the integrated variance (Baldeaux, 2012).
Fortunately, with the change of probability measure technique, we manage to find the analytic formula for this conditional expectation.
\begin{thm}\label{thm2}
Under the 3/2 stochastic volatility model specified by (\ref{themodel}), the characteristic function of the integrated variance conditional on $V_{t'}$ admits
\begin{equation}\label{closed_conchf}
  \E \left[e^{\I\xi \int_t^{t'} V_s\,\D s}\big|V_{t'},V_t\right]=\frac{I_{\sqrt{\left(1+\frac{2\kappa}{\varepsilon^2} \right)^2-\frac{8\I\xi}{\varepsilon^2}}}\left(\frac{2}{C_t}\sqrt{\frac{A_t}{V_{t'} V_t}}\right)}{I_{1+\frac{2\kappa}{\varepsilon^2}}\left(\frac{2}{C_t}\sqrt{\frac{A_t}{V_{t'} V_t}}\right)},
\end{equation}
where the notations $A_t$ and $C_t$ are the same as those given in Theorem~\ref{thm1}.
\end{thm}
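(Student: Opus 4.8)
The plan is to compute the conditional characteristic function $\E[e^{\I\xi\int_t^{t'}V_s\,\D s}\mid V_{t'},V_t]$ by expressing it as a ratio of transition densities under a suitably tilted measure. The key observation is that the unconditional quantity $\E_t[e^{\I\xi\int_t^{t'}V_s\,\D s}\mid V_{t'}]$ is, by Bayes' rule, equal to $\frac{q_\xi(V_{t'}\mid V_t)}{p_V(V_{t'}\mid V_t)}$ times a normalizing factor, where $q_\xi$ denotes the "density" of $V$ under the dynamics weighted by the exponential functional. More precisely, I would define a new measure $\tilde Q$ via a Radon-Nikodym derivative that absorbs the exponential weight $e^{\I\xi\int_t^{t'}V_s\,\D s}$ into a shifted drift for the variance process. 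Because the 3/2 process is most naturally handled through its reciprocal, I would work with $U_t=1/V_t$, which satisfies the inhomogeneous CIR equation \eqref{U_tCIR}, and translate the exponential functional accordingly: since $\int_t^{t'}V_s\,\D s=\int_t^{t'}U_s^{-1}\,\D s$, the weight becomes an exponential of the time-integral of the inverse CIR process.

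The mechanism I would exploit is the classical observation that adding a term proportional to $\int U_s^{-1}\,\D s$ in the exponent of a CIR (squared Bessel) expectation is equivalent to shifting the \emph{index} of the underlying Bessel process, i.e. changing the dimension parameter. Concretely, the CIR process has mean-reversion coefficient controlled by $1+\frac{2\kappa}{\varepsilon^2}$ (the Bessel index appearing in $p_U$ via \eqref{density_U}), and the insertion of the functional $e^{\I\xi\int V_s\,\D s}$ modifies this index to $\sqrt{(1+\frac{2\kappa}{\varepsilon^2})^2-\frac{8\I\xi}{\varepsilon^2}}$. This is precisely the index that appears in the numerator Bessel function of \eqref{closed_conchf}. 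The change-of-measure step thus produces a second transition density of the \emph{same} functional form as \eqref{conditional_densityV} but with the index $1+\frac{2\kappa}{\varepsilon^2}$ replaced by the new value, and all the prefactors $\frac{A_t}{C_t}$, the exponential term, and the power term cancel in the ratio, leaving exactly the quotient of modified Bessel functions with the common argument $\frac{2}{C_t}\sqrt{\frac{A_t}{V_{t'}V_t}}$.

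The steps in order: first, rewrite the functional in terms of $U$ and set up the Girsanov density that reweights the Bessel/CIR dynamics by the exponential of $\int_t^{t'}U_s^{-1}\,\D s$; second, identify how this reweighting acts on the squared Bessel representation from Lemma~\ref{thm_prob}, tracking the shift in the Bessel index while verifying that the space-time change leaves the argument of the Bessel function invariant; third, apply Bayes' formula $\E[e^{\I\xi\int V}\mid V_{t'},V_t]=\frac{\tilde p(V_{t'}\mid V_t)}{p_V(V_{t'}\mid V_t)}$ and cancel the matching prefactors. The main obstacle I anticipate is rigorously justifying the index-shift: one must confirm that the added potential term proportional to $U_s^{-1}$ combines with the squared-Bessel generator to produce another bona fide squared-Bessel generator of shifted dimension (rather than some non-Bessel perturbation), and that the associated exponential martingale is genuinely a density (finite expectation, no explosion). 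This is where the reciprocal relation between the 3/2 and CIR processes, together with the known Bessel-bridge / Hartman-Watson type identities for $\int_0^t \mathrm{BES}_s^{-1}\,\D s$, must be invoked carefully; once that identification is secured, the cancellation producing \eqref{closed_conchf} is purely algebraic.
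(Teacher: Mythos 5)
Your proposal follows essentially the same route as the paper's Appendix C: pass to the reciprocal CIR process $U=1/V$, tilt the measure by the power eigenfunction $\gamma(y)=y^{\nu}$ so that the functional $e^{\I\xi\int_t^{t'}U_s^{-1}\,\D s}$ is absorbed into a drift shift that moves the Bessel index from $1+\tfrac{2\kappa}{\varepsilon^2}$ to $\sqrt{\left(1+\tfrac{2\kappa}{\varepsilon^2}\right)^2-\tfrac{8\I\xi}{\varepsilon^2}}$, and then read off the conditional characteristic function as a ratio of the two transition densities. The step you flag as the main obstacle is exactly what the paper supplies via a Hurd--Kuznetsov-type change-of-measure lemma together with a Fourier-inversion argument in an auxiliary variable $\varpi$; note only that the Bayes ratio must carry the boundary factor $\left(U_t/(A_tU_{t'})\right)^{\nu}$ from the Radon--Nikodym derivative, which is precisely what makes the power prefactors cancel.
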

\begin{proof}
  The proof relies on a smart choice of a new probability measure under which the evaluation of expectations can be done conveniently. The details can be found in Appendix~\ref{appC}.
\end{proof}

%Baldeaux (2012) proposes the exact simulation of the stock price process under the 3/2 model, where the key step is to derive the closed-form expression of the conditional characteristic function of the integrated variance $\int_t^{t'} V_s~\D s$. One can immediately see that this result can be viewed as a generalization of Baldeaux (2012) for a time-dependent long-term mean parameter $\theta_t$ by adopting a simpler approach.

\begin{remark}
Formula~\eqref{closed_conchf} can be viewed as a generalization of Baldeaux (2012) for a time-dependent mean reversion parameter $\theta_t$. Baldeaux's derivation is based on Lie symmetry and Laplace method, whereas in this article we manage to propose an easier alternative to obtain a more general result. Furthermore, one can now construct the exact simulation of the asset price process under the 3/2 model with a time-dependent mean reversion parameter.
\end{remark}

One can now easily verify that combining (\ref{G_condition}), (\ref{joint_single}), (\ref{conditional_densityV}) and (\ref{closed_conchf}) immediately leads to \eqref{jointG}.
%We remark that Zeng \emph{et al.} (2013) exploits a similar approach to derive a specialized version of Theorem \ref{thm1} when the mean reversion parameter is constant, i.e. $\theta_t=\theta$. However, it relies on some established intermediate results such as the closed-form conditional characteristic function of the integrated variance (see Baldeaux, 2012), which is derived using Lie symmetry and Laplace method. Here, we manage to propose an easier alternative to obtain a more general result.

\subsection{Marginal characteristic functions and transforms}

From the partial transform of the joint transition density of the triple $(X,I,V)$, we can deduce many useful marginal characteristic functions and transforms, including the renowned formulas proposed by Carr and Sun (2007) and Lewis (2014).

%\subsubsection*{Joint characteristic function of $(X,I)$}
%
%It is known that valuation of European contingent claims that depend on the asset price, the quadratic variation, or both, can be conveniently carried out with the Fourier transform method with the aid of the joint characteristic function of $(X,I)$. The analytic formula of the joint characteristic function is first proposed by Carr and Sun (2007). Equipped with the partial transform, we can conveniently retrieve their result.
\begin{corollary}[Carr and Sun, 2007]\label{coro1}
Under the 3/2 stochastic volatility model specified by (\ref{themodel}), the joint characteristic function of the pair $(X,I)$ conditional on $\mathcal{F}_t$ is given by
\begin{equation}\label{jointchXI}
\begin{aligned}
  \E_t[e^{\I\omega X_{t'}+\I\eta I_{t'}}]
  &=e^{\I\omega X_t+\I\eta I_t} h(t,V_t;t',\omega,\eta),
  \end{aligned}
  \end{equation}
  where
  \begin{equation}\label{funh}
  h(t,v;t',\omega,\eta)=e^{a(t'-t)}\frac{\Gamma(\tilde\beta-\tilde\alpha)}{\Gamma(\tilde\beta)}
  \left(\frac{1}{C_tv}\right)^{\tilde\alpha}M\left(\tilde\alpha,\tilde\beta,-\frac{1}{C_tv}\right).
  \end{equation}
Here,
\begin{equation*}
  \tilde\alpha=-\frac{1}{2}-\frac{\tilde\kappa}{\varepsilon^2}+c,~~~\tilde\beta = 1+2c,
  \end{equation*}
  $\Gamma$ is the gamma function, and $M$ is the confluent hypergeometric function of the first kind.
\end{corollary}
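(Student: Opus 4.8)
The plan is to obtain this joint characteristic function as a \emph{marginal} of the partial transform $\check G$ already established in Theorem~\ref{thm1}, by integrating out the terminal instantaneous variance. Since $\check G$ is the $(x',y')$-transform of the triple density $G$, integrating over all terminal variance values $v'$ recovers the joint transform of $(X_{t'},I_{t'})$ alone. Concretely, evaluating at the current state $(X_t,I_t,V_t)$ and using \eqref{jointG},
\begin{equation*}
\begin{split}
  \E_t[e^{\I\omega X_{t'}+\I\eta I_{t'}}]&=\int_0^\infty \check G(t,X_t,I_t,V_t;t',\omega,\eta,v')\,\D v'\\
  &=e^{\I\omega X_t+\I\eta I_t}\int_0^\infty g(t,V_t;t',\omega,\eta,v')\,\D v',
\end{split}
\end{equation*}
so the task reduces to computing $h(t,v;t',\omega,\eta)=\int_0^\infty g\,\D v'$ with $g$ given by \eqref{fung}.

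First I would perform the substitution $u=1/v'$, which is tailored to the $1/(v')^2$ prefactor in \eqref{fung}: the factor $(v')^{-2}\,\D v'$ becomes simply $\D u$, the argument of the Bessel function turns into $\tfrac{2}{C_t}\sqrt{A_t u/v}$, and the exponential splits as $\exp(-A_t u/C_t)\exp(-1/(C_t v))$, the second factor pulling out of the integral. The remaining integral then takes the canonical Laplace--Bessel form $\int_0^\infty u^{\mu-1}e^{-\alpha u}I_{2c}(2\beta\sqrt u)\,\D u$, with $\mu=\tfrac32+\tilde\kappa/\varepsilon^2$, $\alpha=A_t/C_t$ and $\beta=\tfrac{1}{C_t}\sqrt{A_t/v}$, so that $\beta^2/\alpha=1/(C_t v)$.

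Next I would evaluate this integral by inserting the power-series definition of $I_{2c}$ and integrating term by term against $e^{-\alpha u}$, each term yielding a Gamma function; the resulting series is exactly a confluent hypergeometric function, giving
\begin{equation*}
  \int_0^\infty u^{\mu-1}e^{-\alpha u}I_{2c}(2\beta\sqrt u)\,\D u
  =\frac{\beta^{2c}}{\alpha^{\mu+c}}\frac{\Gamma(\mu+c)}{\Gamma(1+2c)}
  M\Big(\mu+c,\,1+2c,\,\tfrac{1}{C_tv}\Big).
\end{equation*}
Since $\mu+c=\tfrac32+\tilde\kappa/\varepsilon^2+c=\tilde\beta-\tilde\alpha$ and $1+2c=\tilde\beta$, the Gamma prefactor is already $\Gamma(\tilde\beta-\tilde\alpha)/\Gamma(\tilde\beta)$, matching \eqref{funh}.

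The step I expect to be the crux is reconciling the argument of $M$: the direct integration produces $M(\mu+c,\tilde\beta,+1/(C_tv))$, whereas the stated $h$ carries the negative argument $-1/(C_tv)$ with first parameter $\tilde\alpha$. These are matched by Kummer's transformation $M(a,b,z)=e^{z}M(b-a,b,-z)$, under which $b-a=\tilde\beta-(\mu+c)=\tilde\alpha$ and the generated factor $e^{1/(C_tv)}$ cancels exactly the $e^{-1/(C_t v)}$ pulled out earlier. It then only remains to consolidate the powers of $A_t$, $C_t$ and $v$: the $A_t$ powers cancel completely, while the $C_t$ and $v$ powers combine into $(C_tv)^{1/2+\tilde\kappa/\varepsilon^2-c}=(1/(C_tv))^{\tilde\alpha}$, reproducing \eqref{funh} verbatim. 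The only analytic caveat is justifying the term-by-term integration, valid whenever $\mathrm{Re}(\mu+c)>0$, after which the identity extends in $\omega,\eta$ by analytic continuation.
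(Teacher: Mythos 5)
Your proposal is correct and follows essentially the same route as the paper's own proof in Appendix~\ref{deriveh}: marginalize the partial transform over $v'$, substitute $u=1/v'$, evaluate the resulting Laplace--Bessel integral as a confluent hypergeometric function (which you derive from the series while the paper simply cites the identity), and finish with Kummer's transformation $M(a,b,z)=e^{z}M(b-a,b,-z)$. All parameter identifications ($\mu+c=\tilde\beta-\tilde\alpha$, the cancellation of the $A_t$ powers, and the consolidation into $(1/(C_tv))^{\tilde\alpha}$) check out.
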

\begin{proof}
See Appendix \ref{deriveh}.
\end{proof}
%\begin{remark}
%The function $h$ is known as the fundamental transform [see Lewis (2000), Torricelli (2012)].
%\end{remark}
\begin{remark}
The marginal characteristic functions of $X$ and $I$ can be deduced by simply setting $\eta=0$ and $\omega=0$, respectively.
\end{remark}

%\subsubsection*{Partial transform of the joint density function of $(X,V)$}

Denote the joint transition density function of $(X,V)$ by $G_1(t,x,v;t',x',v')$ and define its partial transform by
\begin{equation}\label{checkGXV}
  \check G_1(t,x,v;t',\omega,v') = \int_{-\infty}^\infty e^{\I\omega x'}G_1(t,x,v;t',x',v')\,\D x'.
\end{equation}
\begin{corollary}\label{corol2}
Under the 3/2 stochastic volatility model specified by (\ref{themodel}), the partial transform defined by (\ref{checkGXV}) is given by
\begin{equation}\label{GXV}
  \check G_1(t,x,v;t',\omega,v') = e^{\I\omega x}g_1(t,v;t',\omega,v'),
\end{equation}
where
\begin{equation*}
\begin{aligned}
  g_1(t,v;t',\omega,v') &= g(t,v;t',\omega,0,v')\\
  &= e^{a(t'-t)}\frac{A_t}{C_t}\exp\left(-\frac{A_tv+v'}{C_tvv'}\right)(v')^{-2}
\left(\frac{A_tv}{ v'}\right)^{\frac{1}{2}+\frac{\tilde\kappa}{\varepsilon^2}}
  I_{2 c}\left(\frac{2}{C_t}\sqrt{\frac{A_t}{vv'}}\right)
  \end{aligned}
\end{equation*}
with
\[
   c = \sqrt{\left(\frac{1}{2}+\frac{\tilde\kappa}{\varepsilon^2}\right)^2+\frac{\I\omega+\omega^2}{\varepsilon^2}}.
\]
\end{corollary}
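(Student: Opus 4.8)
The plan is to obtain Corollary~\ref{corol2} as a direct specialization of Theorem~\ref{thm1}, by identifying the $(X,V)$ partial transform with the triple transform evaluated at $\eta=0$. The starting observation is that the quadratic variation $I_t=\int_0^t V_s\,\D s$ enters neither the dynamics of $X$ nor those of $V$ in \eqref{themodel}; consequently the joint law of $(X_{t'},V_{t'})$ given $(X_t,V_t)=(x,v)$ does not depend on the accumulated value $I_t=y$, and the $(X,V)$ transition density is recovered from the triple density by marginalizing out the terminal quadratic variation,
\[
  G_1(t,x,v;t',x',v')=\int_0^\infty G(t,x,y,v;t',x',y',v')\,\D y',
\]
the right-hand side being independent of $y$ by the spatial homogeneity of $G$ in the $y$-variable (since $I_{t'}=y+\int_t^{t'}V_s\,\D s$ shifts only by the increment).

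First I would substitute this identity into the definition \eqref{checkGXV} of $\check G_1$ and apply Fubini's theorem to interchange the $x'$- and $y'$-integrations,
\[
  \check G_1(t,x,v;t',\omega,v')=\int_{-\infty}^\infty\!\int_0^\infty e^{\I\omega x'}G(t,x,y,v;t',x',y',v')\,\D y'\,\D x'.
\]
Comparing this with the definition \eqref{transformeddensity} of the triple transform, the right-hand side is exactly $\check G(t,x,y,v;t',\omega,\eta,v')$ evaluated at $\eta=0$, since the suppressed exponential factor $e^{\I\eta y'}$ reduces to $1$. Hence $\check G_1(t,x,v;t',\omega,v')=\check G(t,x,y,v;t',\omega,0,v')$.

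It then remains to read off the explicit expression. Setting $\eta=0$ in the solution form \eqref{checkG} collapses the prefactor $e^{\I\omega x+\I\eta y}$ to $e^{\I\omega x}$, which yields the claimed representation \eqref{GXV} with $g_1(t,v;t',\omega,v')=g(t,v;t',\omega,0,v')$. Substituting $\eta=0$ into formula \eqref{fung} for $g$, every ingredient $a$, $\tilde\kappa$, $A_t$, $C_t$ is unaffected, and the only change occurs in the Bessel index $c$, whose radicand loses the $-2\I\eta/\varepsilon^2$ term and becomes $\left(\frac{1}{2}+\frac{\tilde\kappa}{\varepsilon^2}\right)^2+\frac{\I\omega+\omega^2}{\varepsilon^2}$, precisely the value stated in the corollary.

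The argument is essentially bookkeeping, so there is no deep obstacle; the one point deserving care is the justification of the marginalization step together with the Fubini interchange. Concretely, I would verify that $G$ is integrable in $(x',y')$ for the relevant range of $\omega$ so that the order of integration may be swapped, and note that the support constraint $y'\ge y$ (since $I$ is nondecreasing) is harmless because $G$ vanishes for $y'<y$, making the lower limit $0$ in \eqref{transformeddensity} consistent with marginalizing over all admissible terminal quadratic variations. As an alternative that sidesteps any integrability bookkeeping, one could instead argue by PDE uniqueness: $g_1$ solves \eqref{pde2} with $\eta=0$ subject to the delta terminal condition, and $g(t,v;t',\omega,0,v')$ solves exactly the same problem, so the two must coincide.
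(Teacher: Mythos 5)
Your proposal is correct and follows essentially the same route as the paper, which simply sets $\eta=0$ in the triple partial transform; your additional justification of the marginalization over $y'$ and the Fubini interchange fills in details the paper leaves implicit but does not change the argument.
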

\begin{proof}
  By virtue of (\ref{transformeddensity}), $\check G_1$ can be obtained by simply setting $\eta=0$ in (\ref{jointG}).
\end{proof}
\begin{remark}
One can immediately see that this result can be viewed as a generalization of Lewis (2014) for a time-dependent mean reversion parameter $\theta_t$.
\end{remark}
%\begin{remark}
%One can also use Theorem~\ref{thm1} to derive the partial transform of the joint transition density of $(I,V)$. This result has potential applications in pricing pure variance derivatives, but is not so useful as the previous two. Thus, we omit the result to avoid deviation.
%\end{remark}
\begin{remark}
  By further setting $\omega=0$ in (\ref{GXV}), we can retrieve the transition density of $V$:
  \begin{equation}\label{GV}
G_V(t,v;t',v')= g(t,v;t',0,0,v').
  \end{equation}
This hence provides an alternative derivation of (\ref{conditional_densityV}).
\end{remark}

%\subsubsection{Multivariate joint characteristic functions}

The joint characteristic function of $(X,I)$ given by (\ref{jointchXI}) is useful for pricing European contingent claims that only depend on the values of $(X_T,I_T)$ at the maturity $T$. However, for more exotic derivatives, (\ref{jointchXI}) alone is not enough. For instance, the pricing of discretely monitored finite-maturity timer options also requires the joint characteristic function of $(X_{t_i},I_{t_j})$, where $i\neq j$. The evaluation of discretely sampled weight moment swaps needs the joint characteristic function of $(X_{t_i},X_{t_j},X_{t_k})$, where $i,j$ and $k$ are all different. These multivariate joint characteristic functions cannot be deduced from Corollary~\ref{coro1} or \ref{corol2}, and one needs to make use of the full characterization of the triple.

Let $\bw=(\omega_1,\cdots,\omega_m)^T$ and $\be =(\eta_1,\cdots,\eta_m)^T$ be scalar vectors in $\mathbb{R}^m$, and $\mathbf{X}_{\mathcal{T}}=(X_{t_1},\cdots,X_{t_m})^T$ and $\mathbf{I}_{\mathcal{T}}=(I_{t_1},\cdots,I_{t_m})^T$ be vectors of log asset price and quadratic variation, respectively, observed at a sequence of ascending future instants denoted by $\mathcal{T}=\{t_i\}_{i=1}^m$. Then, we define the multivariate joint characteristic function by
\begin{equation}\label{mulchf}
  \Phi(t,X_t,I_t,V_t;{\mathcal{T}},\bw,\be) = \E_t[e^{\I\bw\cdot\mathbf{X}_{\mathcal{T}}+\I\be\cdot\mathbf{I}_{\mathcal{T}}}],
\end{equation}
where $t$ denotes the current time and $t<t_i$ for any $i$. Specifically, the bivariate joint characteristic function with $m=2$ is of particular interest and will be used repeatedly in the subsequent section.
\begin{corollary}\label{coro2}
  Under the 3/2 stochastic volatility model specified by (\ref{themodel}), the bivariate joint characteristic function defined by (\ref{mulchf}) with $m=2$ is given by
  \begin{equation}\label{bichf}
  \begin{split}
    &\quad\Phi(t,X_t,I_t,V_t;\mathcal{T},\bw,\be) \\
     &= e^{\I(\omega_1+\omega_2)X_t+\I(\eta_1+\eta_2)I_t}
    \int_0^\infty g(t,V_t;t_1,\omega_1+\omega_2,\eta_1+\eta_2,v')h(t_1,v';t_2,\omega_2,\eta_2)\,\D v',
  \end{split}
  \end{equation}
  where $g$ and $h$ are given by (\ref{fung}) and (\ref{funh}), respectively.
\end{corollary}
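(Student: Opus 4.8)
The plan is to exploit the Markov property of the triple $(X,I,V)$ together with the tower rule, reducing the bivariate characteristic function to a nested application of Corollary~\ref{coro1} and Theorem~\ref{thm1}. Since $t<t_1<t_2$, I would first condition on $\mathcal{F}_{t_1}$ and write
\begin{equation*}
\Phi = \E_t\left[e^{\I\omega_1 X_{t_1}+\I\eta_1 I_{t_1}}\,\E_{t_1}\!\left[e^{\I\omega_2 X_{t_2}+\I\eta_2 I_{t_2}}\right]\right].
\end{equation*}
By the Markov property, the inner conditional expectation is exactly the joint characteristic function of $(X,I)$ over $[t_1,t_2]$ evaluated in Corollary~\ref{coro1}. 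Substituting \eqref{jointchXI} gives $\E_{t_1}[e^{\I\omega_2 X_{t_2}+\I\eta_2 I_{t_2}}]=e^{\I\omega_2 X_{t_1}+\I\eta_2 I_{t_1}}\,h(t_1,V_{t_1};t_2,\omega_2,\eta_2)$, so that the $\mathcal{F}_{t_1}$-dependence of the inner factor enters only through $V_{t_1}$.

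Next I would merge the exponentials and reduce the remaining expectation over $[t,t_1]$ to an integral against the partial transform. After the substitution above,
\begin{equation*}
\Phi = \E_t\left[e^{\I(\omega_1+\omega_2) X_{t_1}+\I(\eta_1+\eta_2) I_{t_1}}\,h(t_1,V_{t_1};t_2,\omega_2,\eta_2)\right].
\end{equation*}
The crucial observation is the probabilistic reading of the partial transform: because $\check G$ in \eqref{transformeddensity} integrates only over the terminal $x'$ and $y'$ while leaving $v'$ free, the factor $g$ in \eqref{fung} acts as a transform-weighted density in the terminal variance, whence for any integrable test function $f$,
\begin{equation*}
\E_t\left[e^{\I\omega X_{t_1}+\I\eta I_{t_1}}f(V_{t_1})\right] = e^{\I\omega X_t+\I\eta I_t}\int_0^\infty g(t,V_t;t_1,\omega,\eta,v')\,f(v')\,\D v'.
\end{equation*}
Applying this identity with $\omega=\omega_1+\omega_2$, $\eta=\eta_1+\eta_2$, and $f(v')=h(t_1,v';t_2,\omega_2,\eta_2)$ yields \eqref{bichf} at once.

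The computation itself is routine once these two building blocks are in place, so I expect no serious algebraic obstacle. The one point requiring care is the justification of the test-function identity for $g$: one must confirm that $g(t,v;t_1,\omega,\eta,v')$ is genuinely a transform-weighted density in $v'$ rather than merely a formal kernel, which follows from the definition \eqref{transformeddensity} and the factorization \eqref{checkG}, and that $h(t_1,\cdot\,;t_2,\omega_2,\eta_2)$ has enough decay in $v'$ to license the interchange of expectation and integration and to guarantee convergence of the resulting integral. Checking this integrability against the explicit Bessel-type form of $g$ and the confluent-hypergeometric form of $h$ is the only step that is not entirely immediate.
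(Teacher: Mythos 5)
Your proof is correct and takes essentially the same route as the paper's: tower rule plus Corollary~\ref{coro1} for the inner expectation over $[t_1,t_2]$, followed by the identity $\E_t\bigl[e^{\I\omega X_{t_1}+\I\eta I_{t_1}}f(V_{t_1})\bigr]=e^{\I\omega X_t+\I\eta I_t}\int_0^\infty g(t,V_t;t_1,\omega,\eta,v')f(v')\,\D v'$, which the paper obtains by writing the expectation as an integral against the joint density $G$ and recognizing the inner $x',y'$ integrals as $\check G$. The only difference is that you explicitly flag the integrability and interchange-of-integration issues, which the paper passes over silently.
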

\begin{proof}
  By direct calculation, we have
  \begin{eqnarray*}
    &\quad&  \Phi(t,X_t,I_t,V_t;\mathcal{T},\bw,\be)\\
    &=&\E_t[e^{\I\omega_1X_{t_1}+\I\eta_1I_{t_1}}\E_{t_1}[e^{\I\omega_2X_{t_2}+\I\eta_2I_{t_2}}]]\\
    &=& \E_t[e^{\I(\omega_1+\omega_2)X_{t_1}+\I(\eta_1+\eta_2)I_{t_1}}h(t_1,V_{t_1};t_2,\omega_2,\eta_2)] \quad\text{(by Corollary~\ref{coro1}) }\\
    &=& \int_0^\infty h(t_1,v';t_2,\omega_2,\eta_2) \int_{-\infty}^\infty\int_0^\infty
    e^{\I(\omega_1+\omega_2)x'+\I(\eta_1+\eta_2)y'}
    G(t,X_t,I_t,V_t;t_1,x',y',v')\,\D x'\D y'\D v'\\
    &=& \int_0^\infty h(t_1,v';t_2,\omega_2,\eta_2)\,\check G(t,X_t,I_t,V_t;t_1,\omega_1+\omega_2,\eta_1+\eta_2,v')\,\D v'\\
    &=& e^{\I(\omega_1+\omega_2)X_t+\I(\eta_1+\eta_2)I_t}
    \int_0^\infty g(t,V_t;t_1,\omega_1+\omega_2,\eta_1+\eta_2,v')h(t_1,v';t_2,\omega_2,\eta_2)\,\D v'.
  \end{eqnarray*}
  This completes the proof.
\end{proof}
%\begin{remark}
%  If we choose $\omega_1 = -\omega_2 = -\omega$ and $\eta_1=-\eta_2 =-\eta$, then (\ref{bichf}) reduces to
%  \begin{eqnarray}
%        \E_t[e^{\I\omega(X_{t_2}-X_{t_1})+\I\eta(I_{t_2}-I_{t_1})}] &=& \int_0^\infty g(t,V_t;t_1,0,0,v')h(t_1,v';t_2,\omega,\eta)\,\D v',\nonumber\\
%        &=& \int_0^\infty G_V(t,V_t;t_1,v')h(t_1,v';t_2,\omega,\eta)\,\D v'.\label{fwdchf}
%  \end{eqnarray}
%  Notice that this conditional expectation is neither dependent on $X_t$ nor $I_t$. Therefore, we write $$\Psi(t,V_t;\{t_1,t_2\},\omega,\eta) := \E_t[e^{\I\omega(X_{t_2}-X_{t_1})+\I\eta(I_{t_2}-I_{t_1})}].$$
%  We call $\Psi(t,V_t;\{t_1,t_2\},\omega,\eta)$ the {\em forward joint characteristic function} of $(X,I)$, which could be regarded as an extension of the forward characteristic function of the log asset price proposed by Hong (2004). That is,
%  \begin{equation}\label{fwdchfX}
%    \E_t[e^{\I\omega(X_{t_2}-X_{t_1})}] = \Psi(t,V_t;\{t_1,t_2\},\omega,0).
%  \end{equation}
%\end{remark}
For notational convenience, in the sequel, we suppress the dependence of $\Phi$ on $(X_t,I_t,V_t)$ and write $\Phi_t(\mathcal{T},\bw,\be) $ for short when there is no ambiguity.

\section{Applications in derivatives pricing}

In this section, we demonstrate how the partial transform of the triple and its induced marginal characteristic functions and transforms can be used to price various exotic derivative products that may be embedded with path-dependent feature, and have a complicated terminal payoff structure that depends on the asset price and quadratic variation.

\subsection{Finite-maturity discrete timer options}

The first presence of timer option in the literature dates back to the 90s when Neuberger (1990) and Bick (1995) discuss the pricing and hedging of such an imaginary product. Since the official launch of timer option by Soci\'{e}t\'{e} G\'{e}n\'{e}rale in 2008 [see Sawyer (2007)], research work that examines this exotic product from either an analytic or approximation perspective has been extensive [see Zeng {\it et al.} (2013) and references therein]. Voluminous as the literature is, most of the existing results are achieved under the assumption of perpetuity and continuous monitoring. Direct investigation on the discrete timer option with finite maturity is relatively rare and is more mathematically challenging. In this subsection, we will discuss how to analytically price discrete timer options with finite maturity under the 3/2 model based on our newly derived partial transform of the triple density.

Instead of having a deterministic expiry date as in a vanilla option, a discrete timer option with finite maturity expires on a random date which is either the first time when a pre-specified variance budget is fully consumed by the realized variance of the underlying asset price or the pre-specified mandatory expiry date, depending on which one comes earlier. Let $T$ be the mandatory expiry of the timer option and denote the tenor of the monitoring dates for the realized variance by $\{t_0,t_1\cdots,t_N\}$. For brevity, we assume equally spaced monitoring interval, i.e. $t_j=j\Delta=j\frac{T}{N},$ for $j=1,2,\cdots, N$.

At the initiation of the timer option, the investor specifies an expected investment horizon $T_0$ and a target volatility $\sigma_0$ to define a variance budget
\begin{equation*}
  B=\sigma_0^2 T_0.
\end{equation*}
 Let $\tau_B$ be the first time in the tenor of monitoring dates at which the discrete realized variance exceeds the variance budget $B$, namely,
 \begin{equation*}
   \tau_B=\mathrm{min} \left\{j \Bigg| \sum_{i=1}^j \left( \ln \frac{S_{t_i}}{S_{t_{i-1}}} \right)^2 \geq B \right \} \Delta.
 \end{equation*}
 Let the current time be $t_0=0$, then the price of a finite-maturity discrete timer call option can be decomposed into two components, depending on whether the variance budget is consumed up before the mandatory maturity or not.
\begin{eqnarray}
C_0 &=& \E_0[e^{-r(T\wedge\tau_B)}\max(S_{T\wedge\tau_B}-K,0)]\nonumber\\
&=& \E_0[e^{-rT} \mathrm{max}(S_T-K,0)\mathbf{1}_{\{\tau_B > T\}}+e^{-r\tau_B} \mathrm{max}(S_{\tau_B}-K,0)\mathbf{1}_{\{\tau_B \leq T\}}],\label{timer1}
\end{eqnarray}
where $K$ is the strike price.

To explicitly calculate the timer option price, we use the quadratic variation $I_t$ as a proxy of the discrete realized variance for the monitoring of the first hitting time. That is, we redefine $\tau_B$ as follows:
 \begin{equation*}
   \tau_B=\mathrm{min} \left\{j \Bigg| I_{t_j} \geq B \right \}\Delta.
 \end{equation*}
With this simplification, the price of a finite-maturity discrete timer call option can be conveniently computed by further decomposing it into a sequence of timerlets as follows:
\begin{equation}\label{timer2}
\begin{aligned}
C_0&=\E[\,e^{-rT} \mathrm{max}(S_T-K,0)\mathbf{1}_{\{I_T< B\}}]   \\
&\quad+\E \left[\,\sum_{j=0}^{N-1} e^{-rt_{j+1}} \left( \mathrm{max}(S_{t_{j+1}}-K,0)\mathbf{1}_{\{I_{t_j}< B\}}-\mathrm{max}(S_{t_{j+1}}-K,0)\mathbf{1}_{\{I_{t_{j+1}}< B\}} \right) \right].
\end{aligned}
\end{equation}
The key observation is that the event $\{\tau_B> t\}$ is equivalent to $\{I_t< B\}$. That explains the equivalence between the first term of (\ref{timer1}) and (\ref{timer2}). Other terms can be deduced similarly by noting that
\[
  \{\tau_B \leq T\} = \bigcup_{i=0}^{N-1} \{t_{i}< \tau_B \leq t_{i+1}\}=\bigcup_{i=0}^{N-1}\{I_{t_j}< B,I_{t_{j+1}}\geq B\}.
\]
The above decomposition is first proposed by Lee (2013) and has been applied in Cui (2014).

The series of expectations in \eqref{timer2} can be easily evaluated by the standard transform method [e.g. Lee (2013)], if we know the explicit expressions of the joint characteristic functions of $(X_{t_j}, I_{t_j})$ and $(X_{t_{j+1}}, I_{t_j})$. A direct application of Corollary \ref{coro1} and Corollary~\ref{coro2} gives
\begin{equation*}
\begin{aligned}
  \E_0[e^{\I\omega X_{t_j}+\I\eta I_{t_j}}]
  &=e^{\I\omega X_0+\I\eta I_0} h(t_0,V_0;t_j,\omega,\eta),\\
  \E_0[e^{\I\omega X_{t_{j+1}}+\I\eta I_{t_j}}] &=e^{\I \omega X_0+\I\eta I_0}\int_{0}^{\infty} g(0,V_0;t_j,\omega,\eta,v')h(t_j,v';t_{j+1},\omega,0) \,\D v',
  \end{aligned}
  \end{equation*}
%\begin{equation}
%\begin{aligned}
%  \E_0[e^{\I\omega X_{t_{j+1}}+\I\eta I_{t_j}}]
%  &=\E_0 \left[e^{\I\eta I_{t_j}}\E_{t_j}\left[e^{\I\omega X_{t_{j+1}}}\right] \right]   \\
%  &=\E_0 \left[e^{\I\eta I_{t_j}}e^{i\omega X_{t_j}} h(t_j,V_{t_j};t_{j+1},\omega,0) \right]\\
%  &=\int_{0}^{\infty} \int_{-\infty}^{\infty} \int_{-\infty}^{\infty} e^{\I\eta y'}e^{\I \omega x'} h(t_j,v';t_{j+1},\omega,0) G(0,X_0,I_0,V_0;t_j,x',y',v')~\D y'\D x'\D v'  \\
%  &=\int_{0}^{\infty}  e^{\I\eta I_0}e^{\I \omega X_0} g(0,V_0;t_j,\omega,\eta,v')h(t_j,v';t_{j+1},\omega,0) ~\D v'
%  \end{aligned}
%  \end{equation}
Note that the Fourier transform of $(S_{t_{j+1}}-K)^+\mathbf{1}_{\{I_{t_j}< B\}}$ and $(S_{t_{j+1}}-K)^+\mathbf{1}_{\{I_{t_{j+1}}< B\}}$ admit the same analytic representation
\begin{equation*}
  \hat F(\omega,\eta)=\int_{-\infty}^{\infty}\int_{-\infty}^{\infty}e^{-\I\omega x-\I \eta y}(e^{x}-K)^+\mathbf{1}_{\{y< B\}}\,\D x\D y=\frac{K^{1-\I\omega}e^{-\I\eta B}}{(\I\omega+\omega^2)\I \eta},
\end{equation*}
where $\omega_I<-1$ and $\eta_I>0$.
By Parseval's theorem, the finite-maturity discrete timer option price can be derived as follows
\begin{eqnarray*}
  C_0&=&\frac{1}{4\pi^2}\int_{-\infty}^{\infty}\int_{-\infty}^{\infty}e^{-rT}\hat F(\omega,\eta) \E_0[e^{\I\omega X_{t_{N}}+\I\eta I_{t_N}}]\,\D \omega_R \eta_R
  +\sum_{j=0}^{N-1}\frac{1}{4\pi^2}\int_{-\infty}^{\infty}\int_{-\infty}^{\infty}e^{-rt_{j+1}}\\
  &~~&\left( \hat F(\omega,\eta) \E_0[e^{\I\omega X_{t_{j+1}}+\I\eta I_{t_j}}]-\hat F(\omega,\eta)  \E_0[e^{\I\omega X_{t_{j+1}}+\I\eta I_{t_{j+1}}}] \right) \,\D \omega_R \eta_R  \nonumber\\
  &=&\frac{1}{4\pi^2}\int_{-\infty}^{\infty}\int_{-\infty}^{\infty} \hat F(\omega,\eta)H(\omega,\eta)\,\D \omega_R \eta_R,
\end{eqnarray*}
where
\begin{equation*}
  \begin{aligned}
H(\omega,\eta)
%&=e^{-rT} \E_0[e^{\I\omega X_{t_{N}}+\I\eta I_{t_N}}]+\sum_{j=0}^{N-1}e^{-rt_{j+1}}\left(\E_0[e^{\I\omega X_{t_{j+1}}+\I\eta I_{t_j}}]-\E_0[e^{\I\omega X_{t_{j+1}}+\I\eta I_{t_{j+1}}}] \right) \\
=~&e^{-rT}e^{i\omega X_0+i\eta I_0} h(0,V_0;t_N,\omega,\eta)+ e^{\I \omega X_0+\I\eta I_0}\sum_{j=0}^{N-1}e^{-rt_{j+1}} \\
&\left(\int_{0}^{\infty} g(0,V_0;t_j,\omega,\eta,v')h(t_j,v';t_{j+1},\omega,0)\,\D v'- h(0,V_0;t_{j+1},\omega,\eta)  \right).
\end{aligned}
\end{equation*}

\subsection{Discretely sampled weighted moment swaps}

Let $0=t_0<t_1<\cdots<t_N=T$ be $N+1$ sampling dates and $\{i_k\}_{k=1}^N$ be a sequence of integers that take value from the index set $\{0,1,2,\cdots,N\}$. We define the $k$th weight $w_k$ as a mapping: $S_{t_{i_k}}\mapsto f(S_{t_{i_k}})$ for some function $f$. Then, the floating leg of a weighted $m$th moment swap is defined by
\[
  \frac{1}{T}\sum_{k=1}^Nf(S_{t_{i_k}})\left(\ln\frac{S_{t_k}}{S_{t_{k-1}}}\right)^m.
\]
Immediately, we identify the following special cases of discretely sampled weighted moment swaps.
\begin{table}[H]
  \begin{center}
    \begin{tabular}{llll}
      \hline
      product & $i_k$ & $f(x)$ & $m$\\ \hline
      variance swap & -- & $x\mapsto1$ & 2 \\
      gamma swap & $i_k=k$ & $x\mapsto x/S_0$ & 2 \\
      corridor variance swap & $i_k=k$ or $i_k=k-1$ & $x\mapsto 1_{\{l<x\leq u\}}$ & 2\\
      self-quantoed swap & $i_k=N$ & $x\mapsto x/S_0$ & 2\\
      skewness swap & -- & $x\mapsto1$ & 3\\
      \hline
    \end{tabular}
    \caption{Various weighted moment swaps}\label{tablewms}
  \end{center}
\end{table}
The pricing of the above weighted moment swap requires the computation of the fair strike price $K$ which is set to be
\[
  K = \frac{1}{T}\E_0\left[\sum_{k=1}^Nf(S_{t_{i_k}})\left(\ln\frac{S_{t_k}}{S_{t_{k-1}}}\right)^m\right],
\]
such that it costs zero for both parties to enter into the swap contract. Due to additivity of expectation, we have $K=\frac{1}{T}\sum_{k=1}^NL_k$, where
\[
  L_k = \E_0\left[f(S_{t_{i_k}})\left(\ln\frac{S_{t_k}}{S_{t_{k-1}}}\right)^m\right].
\]
To compute $L_k$, let us first consider the trivial case where $i_k=0$ or $f(x)\equiv 1$, meaning that the weight is deterministic. As a result, we have
\begin{eqnarray*}
  L_k &=& w_k\E_0\left[\left(\ln\frac{S_{t_k}}{S_{t_{k-1}}}\right)^m\right] \\
  &=& w_k\E_0\left[\frac{1}{\I^m}\frac{\partial^m}{\partial \phi^m}e^{\I\phi(X_{t_k}-X_{t_{k-1}})}\Big|_{\phi=0}\right]\\
  &=& \frac{w_k}{\I^m} \frac{\partial^m}{\partial \phi^m}\E_0[e^{\I\phi(X_{t_k}-X_{t_{k-1}})}]\Big|_{\phi=0} \\
  &=& \frac{w_k}{\I^m} \frac{\partial^m}{\partial \phi^m}\int_0^\infty G_V(t_0,V_0;t_{k-1},v')h(t_{k-1},v';t_k,\phi,0)\,\D v'\Big|_{\phi=0},
\end{eqnarray*}
The above calculation applies to variance swap and skewness swap. Note that in the last but second equality, the expectation is identified as the forward characteristic function first proposed by Hong (2004) and later used by Itkin and Carr (2010) to price discrete variance swaps.

Now, suppose $i_k\neq 0$ and $f$ is also nontrivial and assume that $f$ admits the generalized Fourier transform with respect to $\ln S_{t_{i_k}}$ as follows
\[
  \hat f(\omega) = \int_{-\infty}^\infty f(e^x)e^{-\I\omega x}\,\D x,
\]
where the transform variable $\omega$ is a complex number and its imaginary part $\omega_I$ is fixed in a way such that $|\hat f(\omega)|<\infty$. Then, we have
\begin{eqnarray}
  L_k &=& \E_0\left[\frac{1}{2\pi}\int_{-\infty}^\infty \hat f(\omega)e^{\I\omega X_{t_{i_k}}}\,\D \omega_R
  \frac{1}{\I^m}\frac{\partial^m}{\partial \phi^m}e^{\I\phi(X_{t_k}-X_{t_{k-1}})}\Big|_{\phi=0}\right] \nonumber\\
  &=& \frac{1}{2\pi\I^m}\int_{-\infty}^\infty \hat f(\omega) \frac{\partial^m}{\partial \phi^m}
  \E_0\left[e^{\I\omega X_{t_{i_k}}+\I\phi(X_{t_k}-X_{t_{k-1}})}\right]\Big|_{\phi=0}\,\D\omega_R.\label{Lk}
\end{eqnarray}
We remark that in \eqref{Lk}, $\E_0\left[e^{\I\omega X_{t_{i_k}}+\I\phi(X_{t_k}-X_{t_{k-1}})}\right]$ is a model-specific term, whereas all other terms (including the differential operator) can be considered as product-specific. This is a standard integral representation with the Fourier transform method, which holds for any pricing model.
%\begin{remark}
%We have built a unified framework for pricing discretely sampled weighted moment swaps. $\E_0\left[e^{\I\omega X_{t_{i_k}}+\I\phi(X_{t_k}-X_{t_{k-1}})}\right]$ is the only input  characterized by the specified model.
%\end{remark}

In order to calculate the expectation in \eqref{Lk} under the 3/2 model, we first consider the scenario where $i_k\geq k$. Using the tower rule and the bivariate joint characteristic function (\ref{bichf}), we obtain
\begin{eqnarray}
  &\quad&\E_0\left[e^{\I\omega X_{t_{i_k}}+\I\phi(X_{t_k}-X_{t_{k-1}})}\right] \nonumber\\
  &=& \E_0\left[\E_{t_{k-1}}[e^{\I\phi X_{t_k}+\I\omega X_{t_{i_k}}}]e^{-\I\phi X_{t_{k-1}}}\right] \nonumber\\
  &=& \E_0\left[e^{\I(\omega+\phi)X_{t_{k-1}}-\I\phi X_{t_{k-1}}}
    \int_0^\infty h(t_k,v';t_{i_k},\omega,0)g_1(t_{k-1},V_{t_{k-1}};t_k,\omega+\phi,v')\,\D v'\right] \nonumber\\
  &=& e^{\I\omega X_0}\int_0^\infty \int_0^\infty h(t_k,v';t_{i_k},\omega,0)g_1(t_{k-1},v;t_k,\omega+\phi,v')g_1(t_0,V_0;t_{k-1},\omega,v)\,\D v'\D v.\label{ikbig}
\end{eqnarray}
Specifically, when $i_k=k$, the double integral can be further simplified as a single integral. Since $h(t_k,v';t_k,\omega,0)=1$ and $h(t_{k-1},v;t_k,\omega+\phi,0)=\int_0^\infty g_1(t_{k-1},v;t_k,\omega+\phi,v')\,\D v'$, we have
\[
  \E_0\left[e^{\I\omega X_{t_{k}}+\I\phi(X_{t_k}-X_{t_{k-1}})}\right]=
  e^{\I\omega X_0}\int_0^\infty h(t_{k-1},v;t_k,\omega+\phi,0)g_1(t_0,V_0;t_{k-1},\omega,v)\,\D v.
\]
In a similar manner, we obtain the following formula for the case $i_k\leq k-1$:
\begin{eqnarray}
  &\quad&\E_0\left[e^{\I\omega X_{t_{i_k}}+\I\phi(X_{t_k}-X_{t_{k-1}})}\right] \nonumber\\
   &=& e^{\I\omega X_0}\int_0^\infty \int_0^\infty h(t_{k-1},v';t_k,\phi,0)G_V(t_{i_k},v;t_{k-1},v')g_1(t_0,V_0;t_{i_k},\omega,v)\,\D v'\D v. \label{iksmall}
\end{eqnarray}
Further simplification is possible when $i_k=k-1$. In that case, $G_V(t_{k-1},v;t_{k-1},v')=\delta(v-v')$, and hence we have
\[
  \E_0\left[e^{\I\omega X_{t_{k-1}}+\I\phi(X_{t_k}-X_{t_{k-1}})}\right]=
  e^{\I\omega X_0}\int_0^\infty h(t_{k-1},v;t_k,\phi,0)g_1(t_0,V_0;t_{k-1},\omega,v)\,\D v.
\]
Observe that both (\ref{ikbig}) and (\ref{iksmall}) indicate that only one function in the integrand depends on the dummy variable $\phi$. As a result, the differentiation with respect to $\phi$ in (\ref{Lk}) can be conveniently performed on that single function. The final formula for $L_k$ is obtained as a triple integral (a double integral in the two special cases) by plugging (\ref{ikbig}) or (\ref{iksmall}) into (\ref{Lk}).

\noindent
{\it Self-quantoed variance swaps}

\noindent
As an illustrative example, we consider the fair strike price of a self-quantoed variance swap. According to Table~\ref{tablewms}, its fair strike is computed by
\[
  K = \frac{1}{T}\E_0\left[\sum_{k=1}^N\frac{S_T}{S_0}\left(\ln\frac{S_{t_k}}{S_{t_{k-1}}}\right)^2\right].
\]
Apparently, $L_k$ is now determined by (\ref{ikbig}). Moreover,
\[
  \hat f(\omega) = \frac{1}{S_0}\int_{-\infty}^\infty e^xe^{-\I\omega x}\,\D x =\frac{2\pi\delta(\omega+\I)}{S_0}.
\]
Consequently, the integration in (\ref{Lk}) can be computed explicitly as follows:
\begin{eqnarray*}
  L_k &=& -\frac{1}{S_0}\frac{\partial^2}{\partial \phi^2}
  \E_0\left[e^{ X_T+\I\phi(X_{t_k}-X_{t_{k-1}})}\right]\Big|_{\phi=0} \\
  &=& -\int_0^\infty \int_0^\infty h(t_k,v';T,-\I,0)\frac{\partial^2}{\partial \phi^2}g_1(t_{k-1},v;t_k,\phi-\I,v')\Big|_{\phi=0}g_1(t_0,V_0;t_{k-1},-\I,v)\,\D v'\D v.
\end{eqnarray*}
Therefore, we have
\begin{equation}
\begin{split}
  K &= -\frac{1}{T}\int_0^\infty \int_0^\infty \sum_{k=1}^N h(t_k,v';T,-\I,0)\\
  &~\quad\quad\frac{\partial^2}{\partial \phi^2}g_1(t_{k-1},v;t_k,\phi-\I,v')\Big|_{\phi=0}g_1(t_0,V_0;t_{k-1},-\I,v)\,\D v'\D v.
\end{split}
\end{equation}

\section{Numerical examples}
In this section, we show some numerical examples that illustrate the performance of pricing formulas of the timer option and variance derivatives under the 3/2 stochastic volatility model. We also examine the pricing behaviors with respect to varying model parameters.

Table \ref{3o2model} lists the default parameter values of the 3/2 model in our sample calculation. These parameter values, which have been calibrated to the S\&P 500 option prices, are taken from Drimus (2012). All our numerical examples will use this set of parameter values unless specified differently.
\begin{table}[H]
\begin{center}
  \begin{tabular}{cccccccc} \hline
$\kappa$&$\theta$&$\varepsilon$&$V_0$&$\rho$&$S_0$&$r$&$q$\\ \hline
22.84&4.979&8.56&0.060025&-0.99&100&0.015&0\\ \hline
  \end{tabular}
\end{center}
\caption{Parameter values in the 3/2 stochastic volatility model}\label{3o2model}
\end{table}

%\subsection*{Fade-in option}
%\noindent We set the strike prices $K =100$ and the monitoring dates $t_j=0.1j,~j=1,2,\cdots,18$. We investigate the pricing properties of fader options with respect to correlation coefficient values under different fade-in intervals. Figure \ref{fader_options} reveals that the correlation coefficient value has tiny effect on the fade-in option price. A higher correlation coefficient value leads to a slightly larger value of fade-in option price. As expected, the fade-in option price is an increasing function of the width of the fade-in intervals. Especially, when the width of the fade-in intervals becomes sufficiently wide, the fade-in option is reduced to a vanilla European option.
%
%%\begin{figure}[H]
%%  \centering
%%  % Requires \usepackage{graphicx}
%%  \includegraphics[bb = 71 290 533 500]{forward_target_volatility.pdf}\\
%%  \caption{Plots of fader-in option price versus the correlation coefficient value under different fade-in intervals.}\label{fader_options}
%%\end{figure}

\subsection*{Finite-maturity discrete timer options}

We set the strike price to be $K =100$, and the number of monitoring instants to be $N=100$. We also reset the value of correlation coefficient $\rho=-0.5$ and initial variance $V_0=0.087$ which are adopted in Zeng {\it et al.} (2013) for comparison purposes. Considering that extensive analysis has already be done in Zeng {\it et al.} (2013) and references therein, here we mainly explore the impact of variance budget on finite-maturity discrete timer option call prices over a wide range of maturities.

Figure \ref{timer_options} shows that the finite-maturity discrete timer option price is an increasing function of both the maturity $T$ and the variance budget $B$. Intuitively, an option is usually more expensive with a longer exercise time. The higher variance budget $B$ or the longer the maturity $T$ leads to the later arrival of the exercise, thus giving a more expensive timer option price. When $T$ becomes sufficiently large, the finite-maturity discrete timer option value becomes almost insensitive to $T$ and converges to that of the perpetual discrete timer call option from below. The finite-maturity discrete timer option value exhibits a more pronounced convergence given a lower $B$. %While $B$ is quite large or $T$ is sufficiently small, the finite-maturity discrete timer option is almost a vanilla European option.

\begin{figure}[H]
  \centering
  % Requires \usepackage{graphicx}
  \includegraphics[width=0.5\textwidth, bb = 3 5 489 455]{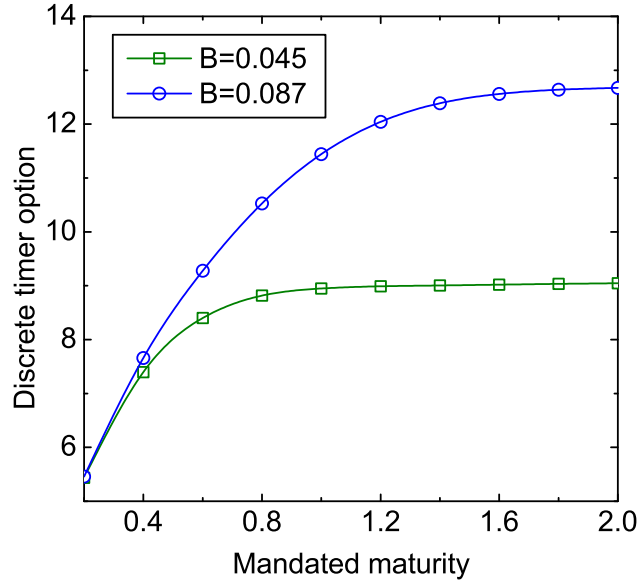}\\
  \caption{Plot of the finite-maturity discrete timer call option price versus mandated maturity given different values of variance budget.}\label{timer_options}
\end{figure}

\subsection*{Self-quantoed variance swaps}

\noindent We perform numerical analysis on the self-quantoed variance swap. Specifically, the fair strike prices of the self-quantoed variance swaps are compared with those of the variance swaps to exemplify the unique features of the former. In Figure~\ref{SQVfig}, two plots of the fair strike price versus the correlation coefficient $\rho$ and volatility of variance $\varepsilon$ are given for half-year daily sampled swap contracts ($N=126,\ \Delta t=1/252$).

In Figure~\ref{SQVfig} (a), we vary the volatility of variance parameter while fixing other parameters at the values given by Table~\ref{3o2model}, to analyze the sensitivity of the fair strike prices of the self-quantoed variance swap and the vanilla variance swap to $\varepsilon$. Within a realistic scope of $\varepsilon$, we observe large deviation in the fair strike prices of both swap products. Moreover, the fair strike prices are decreasing in $\varepsilon$. An intuitive explanation for this phenomenon can be found in Yuen {\em et al.} (2014). By contrasting the self-quantoed variance swap to the variance swap, we see that the disparity between the two gets wider as $\varepsilon$ decreases. That's because the protective feature of the self-quantoed variance swap begins to reinforce when the realized variance is getting large. A similar sensitivity analysis with respect to $\rho$ is performed in Figure~\ref{SQVfig} (b). As expected, the self-quantoed variance swap is much more dependent on the correlation coefficient than the variance swap. When $\rho<0$, we have the renowned leverage effect and the self-quantoed variance swap is embedded with a crash protection for the seller. On the other hand, when $\rho>0$, the soaring realized variance may be further exacerbated by the rising asset price, resulting in a larger risk exposure.

\begin{figure}[H]
  \centering
  % Requires \usepackage{graphicx}
  \includegraphics[bb = 71 290 533 500]{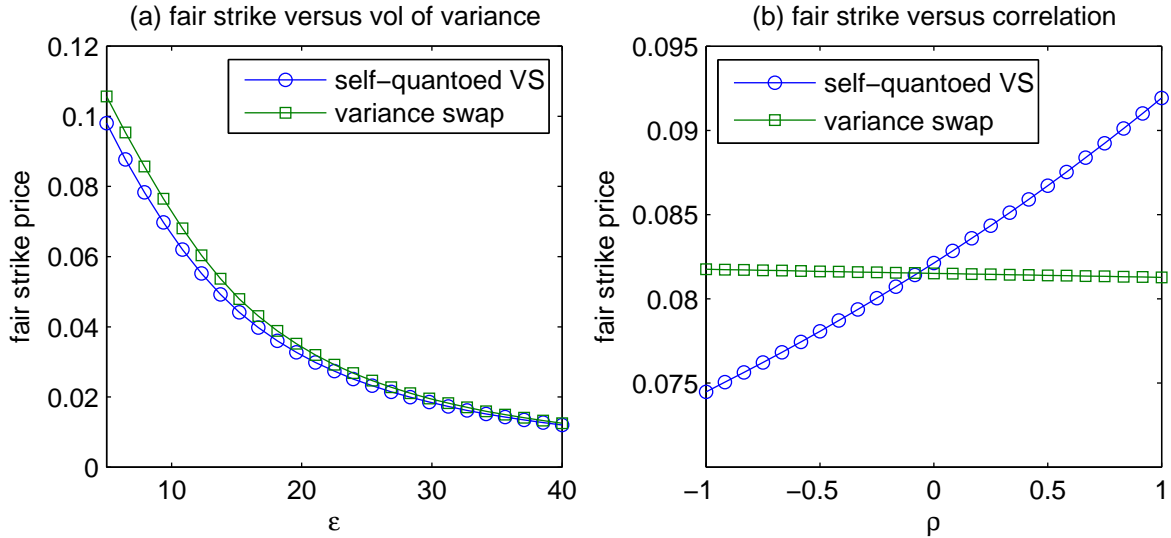}\\
  \caption{Comparison of the fair strike prices of a half-year daily sampled self-quantoed variance swap and a vanilla variance swap, with varying values of (a) vol of variance, (b) correlation coefficient.}\label{SQVfig}
\end{figure}

\section{Conclusions}

We explore the analytic tractability of the 3/2 stochastic volatility model by investigating the joint transition density of the triple consisting of the log asset price, quadratic variation and instantaneous variance. We obtain the closed-form partial transform of the joint triple density under the 3/2 stochastic volatility model with a time-dependent mean reversion parameter under two different formulations. In one approach, we establish the governing PDE of the partial transform which is then transformed to a Riccati system of ODEs and solved explicitly, while in the other one we relate the partial transform to the conditional characteristic function of the integrated variance and compute the latter one using a set of probabilistic tools, such as the change of measure technique. The newly derived partial transform serves as the underpinning of the transform methods for derivatives pricing under the 3/2 model and most extant characteristic functions or transforms can be viewed as marginal versions of the partial transform. As illustrative examples, the pricing formulas for the finite-maturity discrete timer options and self-quantoed variance swaps are expressed in terms of the partial transform and its induced marginal joint characteristic functions.

\subsection*{Acknowledgement}
We are deeply indebted to Professor Yue-Kuen Kwok\footnote{Department of Mathematics, Hong Kong University of Science and Technology} for his valuable advice on this research.

\newpage
\makeatletter
\bibliographystyle{abbrv}
%\bibliography{simple}

\newpage
\begin{appendices}
\renewcommand{\theequation}{\Alph{section}.\arabic{equation}}

\section{Proof of Theorem~\ref{thm1}}\label{appA}

It suffices to solve (\ref{pde2}) which can be rewritten as
\begin{equation}\label{pde3}
  \frac{\partial \tilde g}{\partial t} + [\theta_tv-\tilde\kappa v^2]\frac{\partial \tilde g}{\partial v}
  +\frac{\varepsilon^2v^3}{2}\frac{\partial^2\tilde g}{\partial v^2} - \left[\frac{\I\omega+\omega^2}{2}-\I\eta\right]v\tilde g = 0,
\end{equation}
where $\tilde g=ge^{a(t-t')}$. The terminal condition remains to be $ \tilde g(t',v;t',\omega,\eta,v') = \delta(v-v').$
To solve (\ref{pde3}), we first change the variables
%\footnote{This change of variable corresponds to taking reciprocal of $V_t$ to recover a square-root process.}
$(v,v')$ to $(u,u')$ with $u=1/v$ and $u'=1/v'$. We then define $f(t,u;t',\omega,\eta,u')$ that satisfies
\[
  \tilde g(t,v;t',\omega,\eta,v') = \frac{u^\alpha}{(u')^{\alpha-2}}f(t,u;t',\omega,\eta,u').
\]
%It follows that
%\begin{eqnarray*}
%  \frac{\partial \tilde g}{\partial v} &=& \frac{u^\alpha}{(u')^{\alpha-2}}\left(\alpha \frac{f}{u}+\frac{\partial f}{\partial u}\right)\frac{\partial u}{\partial v} =
%  \frac{-u^{\alpha+2}}{(u')^{\alpha-2}}\left(\alpha \frac{f}{u}+\frac{\partial f}{\partial u}\right),\\
%  \frac{\partial^2 \tilde g}{\partial v^2}&=&\frac{u^\alpha}{(u')^{\alpha-2}}\left[\left(\alpha(\alpha-1) \frac{f}{u^2}+\frac{2\alpha}{u}\frac{\partial f}{\partial u}+\frac{\partial^2 f}{\partial u^2}\right)\left(\frac{\partial u}{\partial v}\right)^2 + \left(\alpha \frac{f}{u}+\frac{\partial f}{\partial u}\right)\frac{\partial^2 u}{\partial v^2}\right].
%\end{eqnarray*}
By expressing the partial derivatives of $\tilde g$ in terms of the partial derivatives of $f$, we obtain the governing equation for $f$ as follows
\begin{eqnarray*}
   - \frac{\partial f}{\partial t} &=& \frac{\varepsilon^2u}{2}\frac{\partial^2f}{\partial u^2}+[\varepsilon^2(\alpha+1)+\tilde\kappa -\theta_t u]\frac{\partial f}{\partial u} -\alpha\theta_t f \\
   &\quad& + \left[\frac{\varepsilon^2}{2}(\alpha^2+\alpha)+\tilde\kappa\alpha-\frac{\I\omega+\omega^2}{2}+\I\eta\right]\frac{f}{u},
\end{eqnarray*}
subject to
\[
  f(t',u;t',\omega,\eta,u') = \delta(u-u').
\]
Apparently, if we choose $\alpha=\alpha(\omega,\eta)$ such that
\[
  \frac{\varepsilon^2}{2}(\alpha^2+\alpha)+\tilde\kappa\alpha-\frac{\I\omega+\omega^2}{2}+\I\eta = 0,
\]
then the governing PDE of $f$ becomes
\begin{equation}\label{affinepde1}
   - \frac{\partial f}{\partial t} = \frac{\varepsilon^2u}{2}\frac{\partial^2f}{\partial u^2}+[\varepsilon^2(\alpha+1)+\tilde\kappa -\theta_tu]\frac{\partial f}{\partial u} -\alpha\theta_t f,
\end{equation}
where all the coefficients are affine in $u$. It follows that $\alpha$ can take two values
\begin{equation}\label{alphasgn}
  \alpha= -\left(\frac{1}{2}+\frac{\tilde\kappa}{\varepsilon^2}\right)\pm c,
\end{equation}
where
\[
  c \equiv c(\omega,\eta) = \sqrt{\left(\frac{1}{2}+\frac{\tilde\kappa}{\varepsilon^2}\right)^2+\frac{\I\omega+\omega^2-2\I\eta}{\varepsilon^2}}.
\]
There are various ways of solving \eqref{affinepde1}. We find the following approach that makes use of the Riccati system of ODEs the best. An alternative method is appended to the end of this proof.

\noindent
\textit{Riccati system of ODEs}

\noindent
We write the Laplace transform of $f$ with respect to $u'$ as follows
\begin{equation}
 \hat f(t,u;t',\omega,\eta,\xi) = \int_0^\infty e^{-\xi u'}f(t,u;t',\omega,\eta,u')\,\D u'.
 \end{equation}
 Then, $\hat f$ satisfies
\begin{equation}\label{affinepde}
   - \frac{\partial\hat f}{\partial t} = \frac{\varepsilon^2u}{2}\frac{\partial^2\hat f}{\partial u^2}+[\varepsilon^2(\alpha+1)+\tilde\kappa -\theta_tu]\frac{\partial\hat f}{\partial u} -\alpha\theta_t\hat f,
\end{equation}
subject to the terminal condition $\hat f(t',u;t',\omega,\eta,\xi)=e^{-u\xi}$. By virtue of the affine structure of the PDE coefficients, (\ref{affinepde}) admits the following exponential affine solution:
\begin{equation}\label{affinesol}
  \hat f(t,u;t',\omega,\eta,\xi) = \exp(B(t,\xi)u+D(t,\xi)),
\end{equation}
where $B(t,\xi)$ and $D(t,\xi)$ are parameter functions determined by the following Riccati system of ODEs:
\begin{eqnarray*}
  -\frac{\partial B}{\partial t} &=& \frac{\varepsilon^2}{2}B^2 - \theta_t B,\\
  -\frac{\partial D}{\partial t} &=& [\varepsilon^2(\alpha+1)+\tilde\kappa]B-\alpha\theta_t,
\end{eqnarray*}
with boundary conditions $B(t',\xi)=-\xi$ and $D(t',\xi)=0$. It can be found that [e.g. Zheng and Kwok (2014)]
\begin{equation}\label{affineB}
  B(t,\xi) = -\frac{\xi}{A_t+C_t\xi},
\end{equation}
where
\begin{eqnarray*}
  A_t = e^{\int_t^{t'}\theta_s\,\D s},\quad C_t = \frac{\varepsilon^2}{2}\int_t^{t'}e^{\int_t^{s}\theta_{\tau}\,\D \tau}\,\D s.
\end{eqnarray*}
By noting that
\[
  \frac{\partial \ln(A_t+C_t\xi)}{\partial t} = -\theta_t -\frac{\varepsilon^2}{2}\frac{\xi}{A_t+C_t\xi},
\]
we obtain
\begin{equation}\label{affineD}
  D(t,\xi) = -2\left[\alpha+1+\frac{\tilde\kappa}{\varepsilon^2}\right]\ln(A_t+C_t\xi) +\left[\alpha+2+\frac{2\tilde\kappa}{\varepsilon^2}\right]\int_t^{t'}\theta_s\,\D s.
\end{equation}
Plugging (\ref{affineB}) and (\ref{affineD}) into (\ref{affinesol}) yields
\begin{equation}\label{hatf1}
  \hat f(t,u;t',\omega,\eta,\xi) = A_t^{\alpha+2+\frac{2\tilde\kappa}{\varepsilon^2}}\exp\left(-\frac{\xi u}{A_t+C_t\xi}\right)
  (A_t+C_t\xi)^{-2\alpha-2-\frac{2\tilde\kappa}{\varepsilon^2}}.
\end{equation}

\noindent
\textit{Inverse Laplace transform}

\noindent
Finally, $f$ can be retrieved by taking the inverse Laplace transform of $\hat f$ with respect to $\xi$. For $h(p)=p^{-\nu-1}e^{\gamma/p}$ for $\nu>-1$, the inverse Laplace transform of $h(p)$ is given by
\begin{equation}\label{laplaceinv}
  \mathcal{L}^{-1}[h](x) = \frac{1}{2\pi\I}\int_{\tau-\I\infty}^{\tau+\I\infty}e^{xp}h(p)\,\D p = \left(\frac{x}{\gamma}\right)^{\nu/2}I_{\nu}(2\sqrt{\gamma x}),
\end{equation}
where $I_\nu(\cdot)$ is the modified Bessel function of the first kind.
Closed-form expression of the inverse Laplace transform $\mathcal L^{-1} [\hat f]$ exists if we can ensure that
\begin{equation}\label{techcond1}
\mathfrak{R}\left(\alpha+1+\frac{\tilde\kappa}{\varepsilon^2}\right)>0,
\end{equation}
where $\mathfrak{R}(\cdot)$ means taking the real part. It will be shown later that this requires choosing the plus sign for $\alpha$ in (\ref{alphasgn}) so that
\begin{equation*}
\alpha=-\left(\frac{1}{2}+\frac{\tilde \kappa}{\epsilon^2}\right)+c.
\end{equation*}
Letting $p=A_t+C_t\xi$ and applying the inverse transform to $\hat f$, we have
\begin{eqnarray*}
  f(t,u;t',\omega,\eta,u') &=& \frac{A_t^{\frac{3}{2}+c+\frac{\tilde\kappa}{\varepsilon^2}}}{2\pi\I C_t}\int_{\tau-\I\infty}^{\tau+\I\infty}
  e^{\frac{u'(p-A_t)}{C_t}}p^{-2c-1}
  e^{-\frac{u(p-A_t)}{C_tp}}\,\D p\\
  &=& \frac{A_t^{\frac{3}{2}+c+\frac{\tilde\kappa}{\varepsilon^2}}e^{-\frac{u+A_tu'}{C_t}}}{2\pi\I C_t}\int_{\tau-\I\infty}^{\tau+\I\infty}
  e^{\frac{u'p}{C_t}}p^{-2c-1}
  e^{\frac{uA_t}{C_tp}}\,\D p\\
  &=& \frac{ A_t^{\frac{3}{2}-c+\frac{\tilde\kappa}{\varepsilon^2}}}{C_t}e^{-\frac{u+A_tu'}{C_t}}\left(\frac{A_tu'}{u}\right)
  ^{c}  I_{2c}\left(\frac{2}{C_t}\sqrt{A_tuu'}\right).
\end{eqnarray*}
Expressed in terms of $v$ and $v'$, $g$ is found to be
\begin{eqnarray*}
  g(t,v;t',\omega,\eta,v') = e^{a(t'-t)}\frac{A_t}{C_t}\exp\left(-\frac{A_tv+v'}{C_tvv'}\right)\frac{1}{(v')^2}\left(\frac{A_tv}{v'}\right)
  ^{\frac{1}{2}+\frac{\tilde\kappa}{\varepsilon^2}}
  I_{2c}\left(\frac{2}{C_t}\sqrt{\frac{A_t}{vv'}}\right).
\end{eqnarray*}
This completes the proof.

\noindent
\subsection*{An alternative method for solving (\ref{affinepde1})}\label{appB}

\noindent
Instead of performing the Laplace transform of $f$ with respect to $u'$ and taking advantage of the Riccati system of ODEs, one may solve (\ref{affinepde1}) using the method of characteristics. To proceed, we let $\tilde f=f\exp\left((\alpha-1)\int_t^{t'}\theta_s\,\D s\right)$, such that $\tilde f$ satisfies the following PDE:
\begin{equation}\label{eqf}
  \frac{\varepsilon^2}{2}\frac{\partial^2(u\tilde f)}{\partial u^2}-\frac{\partial((\theta_tu+\beta)\tilde f)}{\partial u} + \frac{\partial\tilde f}{\partial t} = 0,
\end{equation}
where $\beta=-\varepsilon^2\alpha-\tilde\kappa$. The terminal condition is given by
\[
  \tilde f(t',u;t',\omega,\eta,u') = \delta(u-u').
\]
By applying the Laplace transform to $\tilde f$ with respect to $u$:
$$\hat f(t,\xi;t',\omega,\eta,u')=\int_0^\infty e^{-\xi u}\tilde f(t,u;t',\omega,\eta,u')\,\D u,$$
the second order PDE (\ref{eqf}) is reduced to a first order linear PDE:
\begin{equation}\label{pde4}
  \frac{\partial \hat f}{\partial t} -\left(\varepsilon^2\xi^2/2-\theta_t\xi\right)\frac{\partial\hat f}{\partial \xi} - \beta\xi\hat f = 0,
\end{equation}
subject to
\[
  \hat f(t',\xi;t',\omega,\eta,u') = e^{-u'\xi}.
\]
\begin{lemma}
The solution to (\ref{pde4}) is given by
\begin{equation}\label{solpde4}
  \hat f(t,\xi;t',\omega,\eta,u') = \exp\left(-\frac{u'A_t\xi}{1+C_t\xi}\right)
  \left(\frac{1}{1+C_t\xi}\right)^{2\beta/\varepsilon^2},
\end{equation}
where
\begin{eqnarray*}
  A_t = e^{\int_t^{t'}\theta_s\,\D s}~~\text{and}~~ C_t = \frac{\varepsilon^2}{2}\int_t^{t'}e^{\int_t^{s}\theta_{s'}\,\D s'}\,\D s.
\end{eqnarray*}
\end{lemma}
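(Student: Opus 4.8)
The plan is to solve the first-order linear PDE \eqref{pde4} by the method of characteristics, since the equation is of transport type in $(t,\xi)$ with a linear source term. Rewriting it as
\[
  \frac{\partial\hat f}{\partial t} + \left(\theta_t\xi - \frac{\varepsilon^2\xi^2}{2}\right)\frac{\partial\hat f}{\partial\xi} = \beta\xi\hat f,
\]
I read off the characteristic ODE $\frac{\D\xi}{\D t} = \theta_t\xi - \frac{\varepsilon^2}{2}\xi^2$ together with the companion transport ODE $\frac{\D\hat f}{\D t} = \beta\xi\hat f$ that holds along each characteristic. The terminal data $\hat f(t',\xi) = e^{-u'\xi}$ will be propagated backward in time along these curves.

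First I would integrate the characteristic equation. It is a Bernoulli equation, so the substitution $w = 1/\xi$ linearizes it to $w' + \theta_t w = \varepsilon^2/2$. Using the integrating factor $e^{-\int_t^{t'}\theta_s\,\D s} = 1/A_t$ and integrating backward from the terminal time $t'$, I would obtain, for the characteristic passing through $\xi$ at time $t$, its terminal value $\xi^{*} := \xi(t') = \frac{A_t\xi}{1+C_t\xi}$, where $A_t$ and $C_t$ are precisely the quantities defined in the lemma. Feeding $\xi^{*}$ into the terminal data produces the first factor $\exp(-u'A_t\xi/(1+C_t\xi))$.

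Next I would integrate the transport equation to get $\hat f(t,\xi) = e^{-u'\xi^{*}}\exp\!\big(-\int_t^{t'}\beta\xi(s)\,\D s\big)$. Rather than substitute the explicit solution $\xi(s)$, I would exploit the identity $\frac{\varepsilon^2}{2}\xi = \theta_t - \frac{\D}{\D t}\ln\xi$, a direct rearrangement of the characteristic ODE, to evaluate the remaining integral in closed form as $\int_t^{t'}\xi(s)\,\D s = \frac{2}{\varepsilon^2}\big[\int_t^{t'}\theta_s\,\D s - \ln(\xi^{*}/\xi)\big]$. Since $\xi^{*}/\xi = A_t/(1+C_t\xi)$, exponentiating this yields a factor $A_t^{2\beta/\varepsilon^2}(1+C_t\xi)^{-2\beta/\varepsilon^2}e^{-\frac{2\beta}{\varepsilon^2}\int_t^{t'}\theta_s\D s}$, in which the $A_t^{2\beta/\varepsilon^2}$ cancels against $e^{-\frac{2\beta}{\varepsilon^2}\int_t^{t'}\theta_s\D s}$ because $A_t = e^{\int_t^{t'}\theta_s\D s}$, leaving the clean power factor $(1+C_t\xi)^{-2\beta/\varepsilon^2}$ of \eqref{solpde4}.

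The two integrations themselves are routine; the delicate part is the bookkeeping forced by the time-dependent $\theta_t$ and by the backward orientation of the terminal-value problem, which must be tracked carefully through every integral. The one genuine obstacle is verifying that the $\int\theta_s\,\D s$ contributions arising \emph{separately} from the characteristic map and from the source integral cancel to produce the compact form of the answer; the identity $\frac{\varepsilon^2}{2}\xi = \theta_t - (\ln\xi)'$ is exactly what makes this cancellation transparent and spares me an unwieldy direct evaluation of $\int_t^{t'}\xi(s)\,\D s$.
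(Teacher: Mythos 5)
Your argument is correct and is essentially the paper's own route: both solve the first-order PDE \eqref{pde4} by the method of characteristics, and your computation of the flow map $\xi\mapsto\xi^{*}=A_t\xi/(1+C_t\xi)$ via the Bernoulli substitution $w=1/\xi$, together with the identity $\tfrac{\varepsilon^2}{2}\xi=\theta_t-(\ln\xi)'$ giving $\int_t^{t'}\xi(s)\,\D s=\tfrac{2}{\varepsilon^2}\ln(1+C_t\xi)$, checks out and reproduces \eqref{solpde4} exactly. The only (cosmetic) difference is implementation: the paper first rescales $\zeta=A_t\xi$ and works with two first integrals $\phi,\psi$ of the characteristic system, fitting an arbitrary function $H$ to the terminal data, whereas you integrate the characteristic ODE and the source term explicitly --- the $\int_t^{t'}\theta_s\,\D s$ cancellation you highlight is precisely what the paper's substitution $\zeta=A_t\xi$ accomplishes up front.
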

\begin{proof}
Let $\zeta=\xi A_t$ and $\bar f(t,\zeta;t',\omega,\eta,u')=\hat f(t,\xi;t',\omega,\eta,u')$. Then, (\ref{pde4}) can be written as
\begin{equation}\label{pde5}
  \frac{\partial \bar f}{\partial t} - \frac{\varepsilon^2}{2}\frac{\zeta^2}{A_t}\frac{\partial \bar f}{\partial \zeta} = \beta\frac{\zeta}{A_t}\bar f,
\end{equation}
with terminal condition: $\bar f(t',\zeta;t',\omega,\eta,u')=e^{-u'\zeta}$.

Notice that (\ref{pde5}) is a first-order PDE and its characteristic equations are given by
\[
  \frac{\D\bar f}{A_t^{-1}\beta\zeta\bar f} = \frac{\D\zeta}{-A_t^{-1}\varepsilon^2\zeta^2/2} = \frac{\D t}{1}.
\]
If we define $\phi(t,\zeta,\bar f)=\ln\bar f + \frac{2\beta}{\varepsilon^2}\ln\zeta$ and $\psi(t,\zeta,\bar f)=\frac{1}{\zeta}+\frac{\varepsilon^2}{2}\int_t^{t'}A_s^{-1}\,\D s$, then according to the characteristic equations, we have
\begin{eqnarray*}
  \D\phi &=& \frac{\D\bar f}{\bar f} + \frac{2\beta}{\varepsilon^2}\frac{\D\zeta}{\zeta} = 0,\\
  \D\psi &=& -\frac{\D\zeta}{\zeta^2} - \frac{\varepsilon^2}{2}A_t^{-1}\D t = 0.
\end{eqnarray*}
As a result, the solution to (\ref{pde5}) must have the following form:
\[
  F(\phi,\psi) = F\left(\ln\bar f + \frac{2\beta}{\varepsilon^2}\ln\zeta,\ \frac{1}{\zeta}+\frac{\varepsilon^2}{2}\int_t^{t'}A_s^{-1}\,\D s\right) = 0,
\]
where $F$ is a function to be determined by the initial condition. Equivalently, we could rewrite the general solution in an explicit form of $\bar f$:
\[
  \ln\bar f = -\frac{2\beta}{\varepsilon^2}\ln\zeta + H\left(\frac{1}{\zeta}+\frac{\varepsilon^2}{2}\int_t^{t'}A_s^{-1}\,\D s\right),
\]
where $H$ is a function to be determined by the initial condition.

Using the initial condition: $\ln\bar f = -u'\zeta$, we obtain
\[
  -u'\zeta = -\frac{2\beta}{\varepsilon^2}\ln\zeta + H(\zeta^{-1}).
\]
As a result, $H$ is given by
\[
  H(z) = -\frac{u'}{z} - \frac{2\beta}{\varepsilon^2}\ln z.
\]
Therefore, the solution to (\ref{pde5}) is given by
\[
  \bar f(t,\zeta;t',\omega,\eta,u') = \exp\left(-\frac{u'\zeta}{1+\frac{\varepsilon^2}{2}\zeta\int_t^{t'}A_s^{-1}\,\D s}\right)
  \left(1+\frac{\varepsilon^2}{2}\zeta\int_t^{t'}A_s^{-1}\,\D s\right)^{-2\beta/\varepsilon^2}.
\]
By noting that $\zeta=A_t\xi$ and $C_t = \frac{\varepsilon^2}{2}A_t\int_t^{t'}A_s^{-1}\,\D s = \frac{\varepsilon^2}{2}\int_t^{t'}e^{\int_t^{s}\theta_{s'}\,\D s'}\,\D s$, we obtain (\ref{solpde4}).
%\[
%  \hat f(t,\xi;t',\omega,\eta,u') = \exp\left(-\frac{u'A_t\xi}{1+C_t\xi}\right)
%  \left(\frac{1}{1+C_t\xi}\right)^{2\beta/\varepsilon^2}.
%\]
\end{proof}
Finally, $\tilde f$ is obtained by taking inverse Laplace transform $\mathcal{L}^{-1}[\hat f]$ which can be expressed explicitly, provided that
\begin{equation}\label{techcond2}
\mathfrak{R}(\beta)>0.
\end{equation}
In this case, it requires choosing the minus sign for $\alpha$ in (\ref{alphasgn}), so that
\begin{equation}
\alpha=-\left(\frac{1}{2}+\frac{\tilde \kappa}{\epsilon^2}\right)-c.
\end{equation}
Using formula (\ref{laplaceinv}) and letting $p=1+C_t\xi$, we have
\begin{eqnarray*}
  \tilde f(t,u;t',\omega,\eta,u') &=& \frac{1}{2\pi\I}\int_{\tau-\I\infty}^{\tau+\I\infty}\frac{p^{-2c-1}}{C_t}
  \exp\left(\frac{u(p-1)-A_tu'(1-1/p)}{C_t}\right)\,\D p\\
  &=& \frac{\exp\left(-\frac{u+A_tu'}{C_t}\right)}{C_t}\frac{1}{2\pi\I}\int_{\tau-\I\infty}^{\tau+\I\infty}
  e^{\frac{u}{C_t}p}p^{-2c-1}e^{\frac{A_tu'}{C_tp}}\,\D p\\
  &=& \frac{\exp\left(-\frac{u+A_tu'}{C_t}\right) }{C_t}\left(\frac{u}{A_tu'}\right)^{c}
  I_{2c}\left(\frac{2}{C_t}\sqrt{A_tuu'}\right).
\end{eqnarray*}
Expressed in terms of the original variables $v$ and $v'$, $g$ is shown to have the same expression as (\ref{fung}).

\subsection*{The choice for value of $\alpha$}\label{appC}

When dealing with the inverse Laplace transform of $\hat f$ given by either (\ref{hatf1}) or (\ref{solpde4}), we need some technical condition which imposes constraints on the choice for value of $\alpha$. Here, we briefly discuss how different technical conditions lead to different choices for value of $\alpha$.

Denote $c=c_R+\I c_I$. Since
\[
  \mathfrak{R}\left(\left(\frac{1}{2}+\frac{\tilde\kappa}{\varepsilon^2}\right)^2+\frac{\I\omega+\omega^2-2\I\eta}{\varepsilon^2}\right)
  =\left(\frac{1}{2}+\frac{\kappa}{\varepsilon^2}\right)^2+\frac{\omega^2}{\varepsilon^2}(1-\rho^2)>0,
\]
we conclude that $c_R>0$ under the usual branching line convention.
Moreover,
\[
  c_R^2-c_I^2 = \mathfrak{R}\left(\left(\frac{1}{2}+\frac{\tilde\kappa}{\varepsilon^2}\right)^2+\frac{\I\omega+\omega^2-2\I\eta}{\varepsilon^2}\right)
  =\left(\frac{1}{2}+\frac{\kappa}{\varepsilon^2}\right)^2+\frac{\omega^2}{\varepsilon^2}(1-\rho^2),
\]
implying that
\[
  c_R^2 > \left(\frac{1}{2}+\frac{\kappa}{\varepsilon^2}\right)^2.
\]
In conclusion, we must have $c_R>\frac{1}{2}+\frac{\kappa}{\varepsilon^2}$.
Consequently, we have to choose $\alpha = -\left(\frac{1}{2}+\frac{\tilde\kappa}{\varepsilon^2}\right)+ c$ in order to meet (\ref{techcond1}) and $\alpha = -\left(\frac{1}{2}+\frac{\tilde\kappa}{\varepsilon^2}\right)- c$ in order to satisfy (\ref{techcond2}).

\section{The 3/2 plus jumps model}\label{3o2jump}

In this section, we extend our results to the 3/2 plus jumps model, which is given by
\begin{equation}\label{plusjump}
\begin{split}
  \frac{\D S_t}{S_t} &= (r-q-\lambda \vartheta)\,\D t + \sqrt{V_t}\big(\rho\,\D W^1_t + \sqrt{1-\rho^2}\,\D W^2_t\big)+\big(e^{J}-1\big)\,\D N_t,\\
  \D V_t &= V_t(\theta_t-\kappa V_t)\,\D t + \varepsilon V_t^{3/2}\,\D W^1_t,
\end{split}
\end{equation}
where a jump component modeled by a compound Poisson process is appended to the asset price process of the original 3/2 model (\ref{themodel}). Here, $N_t$ is an independent Poisson process with constant intensity $\lambda$ and $J$ stands for the random jump size which is assumed to be a normal random variable with mean $\mu$ and variance $\sigma^2$. Moreover, the additional term in the drift $\lambda\vartheta\D t$ is the compensator for the jump component, where
\[
  \vartheta = \E\left[e^{J}-1\right]=e^{\mu+\sigma^2/2}-1.
\]
With jumps in the asset price, the quadratic variation process is now given by $I_t=\int_0^t V_s\,\D s+\sum_{m=1}^{N_t}(J_m)^2$, and the triple transition density $G$ satisfies the following PIDE:
\begin{equation*}
\begin{split}
  -\frac{\partial G}{\partial t} &=  \left(r-q-\lambda\vartheta-\frac{v}{2}\right)\frac{\partial G}{\partial x} + \frac{v}{2}\frac{\partial^2G}{\partial x^2} + v\frac{\partial G}{\partial y} +v(\theta_t-\kappa v)\frac{\partial G}{\partial v} + \frac{\varepsilon^2v^3}{2}\frac{\partial^2G}{\partial v^2}+\rho\varepsilon v^2\frac{\partial^2G}{\partial x\partial v}\\
  &\quad+\lambda\int_{-\infty}^\infty (G(t,x+z,y+z^2,v;t',x',y',v')-G(t,x,y,v;t',x',y',v'))f_{J}(z)\,\D z,
\end{split}
\end{equation*}
where $f_{J}=\frac{1}{\sigma\sqrt{2\pi}}e^{-\frac{(x-\mu)^2}{2\sigma^2}}$ is the probability density of the jump size $J$. The partial transform defined by (\ref{transformeddensity}) admits a similar form given by
\[
  \check G(t,x,y,v;t',\omega,\eta,v') = e^{\I\omega x+\I\eta y}g(t,v;t',\omega,\eta,v'),
\]
where $g$ satisfies that following PDE:
\begin{equation*}
  \begin{split}
  -\frac{\partial g}{\partial t} &=  \left[\I\omega\left(r-q-\lambda\vartheta-\frac{v}{2}\right)-\omega^2\frac{v}{2}+\I\eta v\
  +\exp\left(\frac{2\I\mu(\omega+\eta\mu)-\omega^2\sigma^2}{2(1-2\I\eta\sigma^2)}\right)\frac{\lambda}{\sqrt{1-2\I\eta\sigma^2}}-\lambda\right]g \\
  &\quad + [v(\theta_t-\kappa v)+\I\omega\rho\varepsilon v^2]\frac{\partial g}{\partial v}+\frac{\varepsilon^2v^3}{2}\frac{\partial^2g}{\partial v^2}.
  \end{split}
\end{equation*}
The above PDE can be solved in the exactly same manner and the solution has the same expression as in (\ref{fung}) with a modified parameter $a$ given by
\[
  a = \I\omega(r-q-\lambda\vartheta)+\exp\left(\frac{2\I\mu(\omega+\eta\mu)-\omega^2\sigma^2}{2(1-2\I\eta\sigma^2)}\right)\frac{\lambda}{\sqrt{1-2\I\eta\sigma^2}}-\lambda.
\]

\section{Proof of Theorem~\ref{thm2}}\label{appC}

To determine the new measure under which the calculation can be done conveniently, we need the following lemma. Let $\mathcal{D}$ be an open subset and $Y_t\in \mathcal{D}\cup \partial\mathcal{D}$ be a one-dimensional stationary time-inhomogeneous diffusion process satisfying
\begin{equation*}
\D Y_t=\mu (Y_t, t)\,\D t+\sigma(Y_t)\,\D W_t,
\end{equation*}
where $W_t$ is a Brownian motion under the measure $Q$. Its infinitesimal generator is given by
\begin{equation*}
  \mathcal L^Q=\mu (y, t)\partial_y+\frac{1}{2}\sigma^2(y)\partial^2_y.
\end{equation*}

\begin{lemma}\label{change_measure}
Assume $\gamma(y)$ is twice differentiable and $\gamma(y)>0$ for any $y\in\mathcal{D}$. Define a new measure
$\tilde Q$ by the Radon-Nikodym derivative
\begin{equation}
\frac{\D\tilde Q}{\D Q}\Big|_{\mathcal F_t}=\frac{\gamma(Y_{t'})}{\gamma(Y_{t})}e^{-\int_t^{t'}\frac{\mathcal L^Q \gamma(Y_s)}{\gamma(Y_s)}\,\D s}.
\end{equation}
Suppose $Y_t$ remains to be finite and the boundary $\partial \mathcal D$ is unattainable under both $Q$ and $\tilde Q$, then $\tilde Q$ is equivalent to $Q$ and
\begin{equation}
\E_t \left[e^{-\int_t^{t'}\frac{\mathcal L^Q \gamma(Y_s)}{\gamma(Y_s)}\,\D s}l(Y_{t'}) \right]=\gamma(Y_t)\tilde \E_t \left[\frac{l(Y_{t'})}{\gamma(Y_{t'})}\right]
\end{equation}
for any bounded measurable function $l(y)$, where $\tilde\E_t[\cdot]$ is the expectation taken under $\tilde Q$. Moreover, under the new measure $\tilde Q$, the dynamics of $Y_t$ is modified to be
\begin{equation}
\D Y_t=\left(\mu (Y_t, t)+\sigma^2(Y_t)\frac{\gamma'(Y_t)}{\gamma(Y_t)}\right)\,\D t+\sigma(Y_t)\,\D \tilde W_t,
\end{equation}
where $\tilde W_t$ is a $\tilde Q$-Brownian motion.
%and the Markov generator of $Y_t$ is given by the following with a modified drift
%\begin{equation}
%\mathcal L^{Q^*}=\left (\mu (y, t)+\sigma^2(y)\frac{\gamma'(y)}{\gamma(y)}\right)\partial_y+\frac{1}{2}\sigma^2(y)\partial^2_y.
%\end{equation}
\end{lemma}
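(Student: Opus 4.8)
The plan is to recognize this as a Doob-type $h$-transform and prove it in three movements: first verify that the stated density is generated by a local $Q$-martingale, then upgrade that local martingale to a true martingale under the non-explosion hypotheses so that $\tilde Q$ is a genuine equivalent measure, and finally read off the expectation identity and the modified drift as direct consequences.

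First I would introduce the candidate density process
\[
  M_s = \gamma(Y_s)\exp\left(-\int_t^s\frac{\mathcal L^Q\gamma(Y_r)}{\gamma(Y_r)}\,\D r\right),\qquad s\in[t,t'],
\]
so that $M_t=\gamma(Y_t)$ and $M_{t'}/\gamma(Y_t)$ coincides with the stated Radon--Nikodym derivative. Applying It\^o's formula to $\gamma(Y_s)$ gives $\D\gamma(Y_s)=\mathcal L^Q\gamma(Y_s)\,\D s+\gamma'(Y_s)\sigma(Y_s)\,\D W_s$, where the drift is exactly $\mathcal L^Q\gamma$ by the very definition of the generator. Since the exponential factor is of finite variation with logarithmic derivative $-\mathcal L^Q\gamma(Y_s)/\gamma(Y_s)$, the ordinary product rule produces a clean cancellation of the two drift terms and leaves $\D M_s=M_s\,\frac{\gamma'(Y_s)\sigma(Y_s)}{\gamma(Y_s)}\,\D W_s$. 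Hence $Z_s:=M_s/\gamma(Y_t)$ is a local $Q$-martingale with $Z_t=1$, and this is precisely the density process behind $\tilde Q$.

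The hard part will be the second movement: promoting the local martingale $Z$ to a true martingale, which is what guarantees $\E_t[Z_{t'}]=1$ and hence that $\tilde Q$ is a probability measure equivalent to $Q$. This is exactly where the hypotheses enter: positivity of $\gamma$ on $\mathcal D$ keeps $Z$ strictly positive and delivers mutual absolute continuity, while the assumptions that $Y$ remains finite and that $\partial\mathcal D$ is unattainable under both measures rule out the leakage of mass that would otherwise make $Z$ a strict local martingale. Rather than verify a Novikov-type criterion for the kernel $\gamma'\sigma/\gamma$---which need not satisfy any uniform bound---I would run the standard localization argument against these non-attainment conditions, as carried out for diffusions of this type in Jeanblanc \emph{et al.} (2009).

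With $\tilde Q$ established, the remaining two claims are immediate. The expectation identity follows by changing measure: $\tilde\E_t[l(Y_{t'})/\gamma(Y_{t'})]=\E_t[Z_{t'}\,l(Y_{t'})/\gamma(Y_{t'})]$, and substituting the explicit $Z_{t'}=\frac{\gamma(Y_{t'})}{\gamma(Y_t)}\exp(-\int_t^{t'}\mathcal L^Q\gamma/\gamma\,\D s)$ cancels the $\gamma(Y_{t'})$ factors, yielding $\frac{1}{\gamma(Y_t)}\E_t[\exp(-\int_t^{t'}\mathcal L^Q\gamma/\gamma\,\D s)\,l(Y_{t'})]$; rearranging gives the asserted formula. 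The modified dynamics then drops out of Girsanov's theorem: the representation $\D Z_s=Z_s\,\frac{\gamma'(Y_s)\sigma(Y_s)}{\gamma(Y_s)}\,\D W_s$ identifies the Girsanov kernel as $\gamma'(Y_s)\sigma(Y_s)/\gamma(Y_s)$, so $\tilde W_s=W_s-\int_t^s\frac{\gamma'(Y_r)\sigma(Y_r)}{\gamma(Y_r)}\,\D r$ is a $\tilde Q$-Brownian motion, and inserting $\D W_s=\D\tilde W_s+\frac{\gamma'(Y_s)\sigma(Y_s)}{\gamma(Y_s)}\,\D s$ into the original SDE reproduces exactly the stated drift correction $\sigma^2(Y_s)\gamma'(Y_s)/\gamma(Y_s)$.
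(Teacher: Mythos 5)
Your proposal is correct and follows essentially the same route the paper indicates: the paper does not write out a proof but states that the lemma extends Hurd and Kuznetsov (2008) and "can be proved in a similar way based on the integrated It\^o formula and the application of the Girsanov theorem," which is precisely the It\^o-product-rule computation showing $Z_s=\frac{\gamma(Y_s)}{\gamma(Y_t)}e^{-\int_t^s \mathcal L^Q\gamma/\gamma\,\D r}$ is a positive local martingale with kernel $\gamma'\sigma/\gamma$, followed by Girsanov. Your one deferred step --- upgrading the local martingale to a true martingale via the non-attainment hypotheses --- is exactly the step the paper also leaves to the cited literature, so you have in fact supplied more detail than the paper itself.
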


Lemma~\ref{change_measure} is an extension of a related theorem established by Hurd and Kuznetsov (2008) from a general time-homogeneous diffusion process to a time-inhomogeneous diffusion process. It can be proved in a similar way based on the integrated It$\mathrm{\hat o}$ formula and the application of the Girsanov theorem. Equipped with Lemma~\ref{change_measure}, we are now able to obtain the conditional characteristic function of the integrated variance in closed form.

Note that $\E [e^{\I\xi \int_t^{t'} V_s\,\D s}\big|V_{t'},V_t]=\E_t [e^{\int_t^{t'}\frac{\I\xi}{U_s}\,\D s}|U_{t'}]$ and $\E_t[e^{\int_t^{t'}\frac{\I\xi }{U_s}\,\D s}e^{-\I\varpi U_{t'}}]$ can be evaluated in two different ways for any $\varpi$, where $U_t$ is a CIR process specified by (\ref{U_tCIR}). On one hand, we consider the calculation of $\E_t[e^{\int_t^{t'}\frac{\I\xi }{U_s}\,\D s}e^{-\I\varpi U_{t'}}]$ based on Lemma~\ref{change_measure}. We first need to find a twice differentiable function $\gamma(y)$ such that
\begin{equation}\label{Lgamma}
   \frac{\mathcal L^Q \gamma(U_s)}{\gamma(U_s)} = C-\frac{\I\xi}{U_s},
\end{equation}
for some constants $C$. It can be easily verified that $\gamma(y)=y^\nu$ is a solution. In fact, since $\mu (y,t)=(\kappa+\varepsilon^2)-\theta_t y$ and $\sigma(y)=-\varepsilon y$, we have
\begin{equation*}
\begin{aligned}
\frac{\mathcal L^Q \gamma(U_s)}{\gamma(U_s)}
&=-\theta_s \nu +\frac{(\kappa+\varepsilon^2)\nu+\frac{\varepsilon^2}{2}\nu(\nu-1)}{U_s}.
\end{aligned}
\end{equation*}
Apparently, (\ref{Lgamma}) holds if $C=-\theta_s \nu$ and $\nu$ is the solution to
\[
  \frac{\varepsilon^2}{2}\nu^2+(\kappa+\frac{\varepsilon^2}{2})\nu+{\I\xi} = 0,
\]
which admits
\begin{equation*}
\nu=-\frac{\kappa}{\varepsilon^2}-\frac{1}{2}\pm\sqrt{\left(\frac{1}{2}+\frac{\kappa}{\varepsilon^2} \right)^2-\frac{2\I\xi}{\varepsilon^2}}.
\end{equation*}
Applying Lemma~\ref{change_measure} to $U_t$ with $\gamma(y)=y^\nu$ and $l(y)=e^{-\I\varpi y}$ yields
\begin{eqnarray}
\E_t \left[e^{\int_t^{t'}\frac{\I\xi}{U_s}\,\D s}e^{-\I\varpi U_{t'}} \right]&=&e^{-\nu \int_t^{t'}\theta_s\,\D s}\E_t \left[e^{-\int_t^{t'}\frac{\mathcal L^Q \gamma(U_s)}{\gamma(U_s)}\,\D s}e^{-\I\varpi U_{t'}} \right]  \nonumber\\
&=&e^{-\nu \int_t^{t'}\theta_s\,\D s}U_t^\nu\tilde \E_t \left[U_{t'}^{-\nu}e^{-\I\varpi U_{t'}}\right] \nonumber\\
&=&e^{-\nu \int_t^{t'}\theta_s\,\D s}U_t^\nu \int_0^\infty e^{-\I\varpi U_{t'}} U_{t'}^{-\nu}\tilde p_U(U_{t'}|U_t)\,\D U_{t'},\label{prob_expec1}
\end{eqnarray}
where $\tilde p_U$ is the transition density of the CIR process $U_t$ under the new measure $\tilde Q$. Moreover, the $\tilde Q$-dynamics of $U_t$ is given by
\begin{equation}
  \D U_t=\left( (\kappa+\varepsilon^2+\varepsilon^2 \nu)-\theta_t U_t \right)\,\D t-\varepsilon \sqrt{U_t}\,\D \tilde W^1_t,
\end{equation}
where $\tilde W^1_t$ is a $\tilde Q$-Brownian motion. According to Lemma~\ref{thm_prob}, the transition density of $U_t$ under $\tilde Q$ is given by
\begin{equation}\label{density2_U}
  \tilde p_U(U_{t'}|U_{t})=\frac{A_t}{C_t} \mathrm{exp}\left(-\frac{A_t U_{t'}+U_t}{C_t}\right)
   \left(\frac{A_tU_{t'}}{U_t}\right)^{\frac{1}{2}+\frac{\kappa}{\varepsilon^2}+\nu}I_{1+\frac{2\kappa}{\varepsilon^2}+2\nu}\left(\frac{2}{C_t}\sqrt{A_tU_{t'} U_t}\right).
\end{equation}
Notice that in order for make zero an unattainable boundary of $U_t$ under $\tilde Q$, the Feller condition $\mathfrak{R}(\kappa+\varepsilon^2+\varepsilon^2 \nu)>\varepsilon^2/2$ must be observed. Then, it follows that
\begin{equation}\label{exp_nu}
\nu=-\frac{\kappa}{\varepsilon^2}-\frac{1}{2}+\sqrt{\left(\frac{1}{2}+\frac{\kappa}{\varepsilon^2} \right)^2-\frac{2\I\xi}{\varepsilon^2}}.
\end{equation}
On the other hand, we have the following alternative representation:
\begin{eqnarray}
\E_t \left[e^{\int_t^{t'}\frac{\I\xi}{U_s}\,\D s}e^{-\I\varpi U_{t'}} \right]
&=&\E_t \left[e^{-\I\varpi U_{t'}} \E_t \left[e^{\int_t^{t'}\frac{\I\xi}{U_s}\,\D s} \Big|U_{t'}\right] \right] \nonumber \\
&=&\int_0^\infty e^{-\I\varpi U_{t'}} \E_t \left[e^{\int_t^{t'}\frac{\I\xi}{U_s}\,\D s}\Big|U_{t'}\right] p_U(U_{t'}|U_t)\,\D U_{t'}.\label{prob_expec2}
\end{eqnarray}
By equating (\ref{prob_expec1}) and (\ref{prob_expec2}), we obtain
\begin{equation*}
e^{-\nu \int_t^{t'}\theta_s\,\D s}U_t^\nu \int_0^\infty e^{-\I\varpi U_{t'}} U_{t'}^{-\nu}\tilde p_U(U_{t'}|U_t)\,\D U_{t'}=\int_0^\infty e^{-\I\varpi U_{t'}} \E_t \left[e^{\int_t^{t'}\frac{\I\xi}{U_s}\,\D s}\Big|U_{t'}\right] p_U(U_{t'}|U_t)\,\D U_{t'}.
\end{equation*}
Noticing that the above equation holds for any $\varpi$, we can then take the inverse Fourier transform with respect to $\varpi$ on both sides to obtain
\begin{equation}\label{condi_mediate}
\E_t\left[e^{\int_t^{t'}\frac{\I\xi}{U_s}\,\D s}\Big|U_{t'}\right]=\frac{\left(\frac{U_t}{A_tU_{t'}}\right)^\nu \tilde p_U(U_{t'}|U_t)}{p_U(U_{t'}|U_t)}.
\end{equation}
Substituting (\ref{density_U}), (\ref{density2_U}) and (\ref{exp_nu}) into (\ref{condi_mediate}), we have
\begin{equation*}
\E_t \left[e^{\int_t^{t'}\frac{\I\xi}{U_s}\,\D s}\Big|U_{t'}\right]=\frac{I_{\sqrt{\left(1+\frac{2\kappa}{\varepsilon^2} \right)^2-\frac{8\I\xi}{\varepsilon^2}}}\left(\frac{2}{C_t}\sqrt{A_tU_{t'} U_t}\right)}{I_{1+\frac{2\kappa}{\varepsilon^2}}\left(\frac{2}{C_t}\sqrt{A_tU_{t'} U_t}\right)}.
\end{equation*}
This immediately leads to \eqref{closed_conchf}.

\section{Proof of Corollary \ref{coro1}}\label{deriveh}
 Apparently, we have
  \[
    \E_t[e^{\I\omega X_{t'}+\I\omega I_{t'}}] = \int_0^\infty \check G(t,X_t,I_t,V_t;t',\omega,\eta,v')\,\D v'
    =e^{\I\omega X_t+\I\eta I_t}\int_0^\infty g(t,V_t;t',\omega,\eta,v')\,\D v'.
  \]
  Then, it suffices to compute
\begin{eqnarray*}
  h(t,v;t',\omega,\eta) &=& \int_0^\infty g(t,v;t',\omega,\eta,v')\,\D v'\\
  &=& e^{a(t'-t)}\frac{A_t}{C_t}\int_0^\infty e^{-\frac{u+A_tu'}{C_t}}\left(\frac{A_tu'}{u}\right)^{\frac{1}{2}+\frac{\tilde\kappa}{\varepsilon^2}}
  I_{2c}\left(\frac{2}{C_t}\sqrt{A_tuu'}\right)\,\D u'\\
  &=& \frac{e^{a(t'-t)-\frac{u}{C_t}}}{C_tu^{\frac{1}{2}+\frac{\tilde\kappa}{\varepsilon^2}}}\int_0^\infty e^{-\frac{z}{C_t}}
  z^{\frac{1}{2}+\frac{\tilde\kappa}{\varepsilon^2}}I_{2c}\left(\frac{2\sqrt{uz}}{C_t}\right)\,\D z,
\end{eqnarray*}
where $z=A_tu'$. Using the equation
\[
  \int_0^\infty e^{-st}t^\iota I_\varsigma(\lambda\sqrt{t})\,\D t = \frac{\Gamma(\phi)}{\Gamma(\psi)}\frac{X^{\varsigma/2}}{s^{1+\iota}}M(\phi,\psi,X),
\]
where $\phi=1+\iota+\varsigma/2$, $\psi=1+\varsigma$, $X=\frac{\lambda^2}{4s}$ and $\mathfrak{R}(\phi,s)>0$, we obtain
\begin{eqnarray*}
  h(t,v;t',\omega,\eta) &=& e^{a(t'-t)-\frac{u}{C_t}}\frac{\Gamma(1-\alpha)}{\Gamma(2c+1)}\left(\frac{u}{C_t}\right)^{\tilde\alpha}M\left(1-\alpha,2c+1,\frac{u}{C_t}\right)\\
  &=& e^{a(t'-t)}\frac{\Gamma(\tilde\beta-\tilde\alpha)}{\Gamma(\tilde\beta)}\left(\frac{u}{C_t}\right)^{\tilde\alpha}M\left(\tilde\alpha,\tilde\beta,-\frac{u}{C_t}\right)\\
  &=&
  e^{a(t'-t)}\frac{\Gamma(\tilde\beta-\tilde\alpha)}{\Gamma(\tilde\beta)} \left(\frac{1}{C_tv}\right)^{\tilde\alpha}M\left(\tilde\alpha,\tilde\beta,-\frac{1}{C_tv}\right).
\end{eqnarray*}
Here, \[
\tilde\alpha=-\frac{1}{2}-\frac{\tilde\kappa}{\varepsilon^2}+c,~~~\text{and}~~~\quad\tilde\beta = 1+2c.\]
Note that the second equality follows from the identity:
\[
  M(a,b,z) = e^{z}M(b-a,b,-z).
\]

\end{appendices}

\end{document}